\theoremstyle{plain}
  \newtheorem{theorem}[subsection]{Theorem}
  \newtheorem{proposition}[subsection]{Proposition}
  \newtheorem{lemma}[subsection]{Lemma}
  \newtheorem{corollary}[subsection]{Corollary}
\def\bphi{{\boldsymbol{\phi} }}
\def\bpsi{{\boldsymbol{\psi} }}
\def\F{{\mathcal F}}
\def\R{{\mathbb{R}}}
\def\C{{\mathbb{C}}}
\def\I{{\mathcal I}}
\def\Q{{\mathcal Q}}
\def\ch{\mbox{ch} (k)}
\def\chbb{\mbox{chb} (k)}
\def\trigh{\mbox{trigh} (k)}
\def\p{\mbox{p} (k)}
\def\sh{\mbox{sh} (k)}
\def\shh{\mbox{sh}}
\def\chh{\mbox{ch}}
\def\shhb{\overline {\mbox{sh}}}
\def\chhb{\overline {\mbox{ch}}}
\def\shbb{\mbox{shb} (k)}
\def\chb{\overline{\mbox{ch}  (k)}}
\def\shb{\overline{\mbox{sh}(k)}}
\theoremstyle{remark}
  \newtheorem{remark}[subsection]{Remark}
\theoremstyle{definition}
\begin{document}

\def\bphi{{\boldsymbol{\phi} }}
\def\bpsi{{\boldsymbol{\psi} }}
\def\F{{\mathcal F}}
\def\R{{\mathbb{R}}}
\def\C{{\mathbb{C}}}
\def\I{{\mathcal I}}
\def\Q{{\mathcal Q}}
\def\ch{\mbox{\rm ch} (k)}
\def\chbb{\mbox{\rm chb} (k)}
\def\trigh{\mbox{\rm trigh} (k)}
\def\p{\mbox{p} (k)}
\def\sh{\mbox{\rm sh} (k)}
\def\shh{\mbox{\rm sh}}
\def\chh{\mbox{ \rm ch}}
\def\shhb{\overline {\mbox{\rm sh}}}
\def\chhb{\overline {\mbox{\rm ch}}}
\def\shbb{\mbox{\rm shb} (k)}
\def\chb{\overline{\mbox{\rm ch}  (k)}}
\def\shb{\overline{\mbox{\rm sh}(k)}}

\title[Second order corrections
, I]%
{Second order corrections\\ to mean field evolution
for weakly interacting Bosons, I}

\author{M. Grillakis}
\address{University of Maryland, College Park}
\email{mng@math.umd.edu}

\author{M. Machedon}
\address{University of Maryland, College Park}
\email{mxm@math.umd.edu}

\author{D. Margetis}
\address{University of Maryland, College Park}
\email{dio@math.umd.edu}

\thanks{
The first two authors thank William Goldman and John Millson for
discussions related to the Lie algebra of the symplectic group, and
Sergiu Klainerman for the interest shown for this work.
The third author is grateful to Tai Tsun Wu for useful
discussions on the physics of the Boson system.
The third author's research was partially supported
by the NSF-MRSEC grant DMR-0520471 at the University of Maryland,
and by the Maryland NanoCenter.}

\subjclass{}
\keywords{}
\date{}
\dedicatory{}
\commby{}
\maketitle

\maketitle
\begin{abstract}
Inspired by the works of Rodnianski and Schlein \cite{Rod-S} and Wu \cite{wuI,wuII}, we derive
a new nonlinear Schr\"odinger equation that describes
a second-order correction to the usual tensor product (mean-field) approximation for the Hamiltonian evolution of a
many-particle system in Bose-Einstein condensation. We show that our
new equation, if it has solutions with appropriate smoothness and decay properties, implies a new Fock space estimate.
We also show that for an interaction potential $v(x)= \epsilon \chi(x) |x|^{-1}$, where $\epsilon$ is sufficiently small and
$\chi \in C_0^{\infty}$  even, our program can be easily implemented locally in time. We leave global in time issues,
more singular potentials and sophisticated estimates for a subsequent part (part II) of this paper.
\end{abstract}

\section{Introduction}
\label{sec:intro}

An advance in physics in 1995 was the first experimental observation of atoms with integer spin (Bosons)
occupying a macroscopic quantum state (condensate) in a dilute gas at very low temperatures~\cite{andersonetal95,davisetal95}.
This phenomenon of Bose-Einstein condensation has been observed in many
similar experiments since.  These observations have rekindled interest
in the quantum theory of large Boson systems. For recent reviews, see
e.g.~\cite{lieb05,pitaevskii03}.

A system of $N$ interacting Bosons at zero temperature
is described by a symmetric wave function satisfying the $N$-body Schr\"odinger equation.
For large $N$, this description is impractical. It is thus desirable to
replace the many-body evolution by effective (in an appropriate sense) partial differential
equations for wave functions in much lower space dimensions. This approach has led to  ``mean-field" approximations
in which the single particle wave function for the condensate satisfies
nonlinear Schr\"odinger equations (in $3+1$ dimensions). Under this approximation,
the $N$-body wave function is viewed simply as a tensor product of one-particle states.
For early related works,
see the papers by Gross~\cite{gross61,gross63}, Pitaevskii~\cite{pitaevskii61} and Wu~\cite{wuI,wuII}.
In particular, Wu~\cite{wuI,wuII} introduced a {\it second-order approximation} for the Boson many-body wave function
in terms of the {\it pair-excitation} function, a suitable kernel that describes the scattering
of atom {\it pairs} from the condensate to other states. Wu's formulation forms a nontrivial extension of
works by Lee, Huang and Yang~\cite{Lee-H-Y} for the periodic Boson system. Approximations
carried out for pair excitations~\cite{Lee-H-Y,wuI,wuII} make use of quantized fields in the Fock space.
(The Fock space formalism and Wu's formulation are reviewed in sections~\ref{subsec:fock} and~\ref{subsec:wu-pair}, respectively.)

Connecting mean-field approaches to the actual many-particle
Hamiltonian evolution raises fundamental questions. One question is the rigorous derivation and interpretation
of the mean field limit. Elgart, Erd\H{o}s, Schlein and Yau~\cite{E-E-S-Y1,E-Y1,E-S-Y1,E-S-Y2,E-S-Y4,E-S-Y3} showed rigorously
how mean-field limits for Bosons can be extracted in the limit $N\to\infty$ by using Bogoliubov-Born-Green-Kirkwood-Yvon (BBGKY) hierarchies for reduced density matrices.
Another issue concerns the convergence of the microscopic evolution towards the mean field dynamics.
Recently, Rodnianski and Schlein~\cite{Rod-S} provided estimates for the rate of convergence in the
case with Hartree dynamics by invoking the formalism of Fock space.

In this paper, inspired by the works of Rodnianski and Schlein~\cite{Rod-S} and Wu~\cite{wuI,wuII},
we derive a new nonlinear Schr\"odinger equation describing an improved approximation for the evolution of the Boson system.
This approximation offers a second-order correction to the usual tensor product (mean field limit)
for the many-body wave function. Our equation yields a corresponding new estimate in Fock space, which
complements nicely the previous estimate~\cite{Rod-S}.

The static version of the many-body problem is not studied here. The energy spectrum  was addressed by
Dyson~\cite{D} and by Lee, Huang and Yang~\cite{Lee-H-Y}. A mathematical proof of the Bose-Einstein condensation
for the time-independent case was provided recently by Lieb, Seiringer, Solovej and Yngvanson~\cite{LS,lieb05,LSY1,LSY2}.

\subsection{Fock space formalism}
\label{subsec:fock}

Next, we review the Fock space $\F$ over $L^2(\R^3)$,
following Rodnianski and Schlein~\cite{Rod-S}.
The elements of $ \F$ are
vectors of the form $\bpsi=(\psi_0, \psi_1(x_1), \psi_2(x_1, x_2), \cdots )$,
where $\psi_0 \in \C$ and $\psi_n \in L^2_s (\R^{3n})$ are symmetric in $x_1, \ldots, x_n$. The Hilbert space structure of $\F$
is given by $\left(\bphi, \bpsi\right)= \sum_n \int \phi_n \overline{\psi_n} dx$.

For $f \in L^2(\R^3)$ the (unbounded, closed,  densely defined) creation operator
$a^* (f) :\F \to \F$ and annihilation operator $a(\bar f) : \F \to \F$ are defined by

\begin{align*}
&\left(a^*(f)\psi_{n-1}\right)(x_1, x_2, \cdots, x_n)=\frac{1}{\sqrt n}
\sum_{j=1}^n f(x_j)\psi_{n-1}(x_1, \cdots, x_{j-1}, x_{j+1}, \cdots x_n)~,\\
&\left(a(\overline f)\psi_{n+1}\right)(x_1, x_2, \cdots, x_n)=\sqrt{ n+1}
\int \psi_{(n+1)}(x, x_1, \cdots, x_n) \overline f(x)\ dx~.
\end{align*}
The operator valued distributions $a^*_x$ and $a_x$ defined by
\begin{align*}
a^*(f) = \int f(x) a^*_x\ dx~,\\
a(\overline f) = \int \overline f(x)\,a_x\ dx~.
\end{align*}
These distributions satisfy the canonical commutation relations
\begin{align}
[a_x, a^*_y] = \delta(x-y)~,\label{canonical} \\
[a_x, a_y] = [a^*_x, a^*_y] =0 \notag~.
\end{align}

Let $N$ be a fixed integer (the total number of particles), and $v(x)$ be an even potential.
Consider the Fock space Hamiltonian $H_N: \F \to \F$ defined by
\begin{align} \label{H}
H_N&= \int a_x^* \Delta a_x dx + \frac{1}{2N} \int v(x-y) a_x^* a_y^* a_x a_y\ dx\, dy\\
&=: H_0 +\frac{1}{N} V \notag~.
\end{align}
This $H_N$ is a diagonal operator which acts on each $\psi_n$ in correspondence to the
 Hamiltonian
\begin{equation*}H_{N, n } = \sum_{j=1}^{n} \Delta_{x_j} + \frac{1}{2N} \sum_{i\neq j} v(x_i-x_j)~.
\end{equation*}
In the particular case $n=N$, this is the mean field Hamiltonian.
Except for the introduction, this paper deals only with the Fock space Hamiltonian. The reader is alerted that ``PDE" Hamiltonians such as
$H_{N, n}$ will always have two subscripts.
The sign of $v$ will not play a role in our analysis. However, the reader is alerted that due to our sign convention, $v \le 0$ is the "good" sign.
The  time evolution in the coordinate space for Bose-Einstein condensation deals with the function
\begin{equation}e^{i t H_{n, n}} \psi_0 \label{classical}
\end{equation}
for tensor product initial data, i.e., if
\begin{equation*}
\psi_0(x_1, x_2, \cdots, x_n)= \phi_0(x_1) \phi_0(x_2) \cdots \phi_0(x_n)~,
\end{equation*}
where $\|\phi_0\|_{L^2(\R^3)}=1$. This  approach has been highly successful, even for very
singular potentials, in the work of Elgart, Erd\H{o}s, Schlein and Yau~\cite{E-E-S-Y1,E-Y1,E-S-Y1,E-S-Y2,E-S-Y4,E-S-Y3}.
In this context, the convergence of evolution to the appropriate mean field limit (tensor product)
as $N\to \infty$ is established at the level of marginal density matrices
$\gamma_i^{(N)}$ in the trace norm topology. The density matrices are defined as
\begin{align*}
\gamma_i^{(N)}(t, x_1, \cdots, x_i; x'_1, \cdots x'_i)
=\int \psi(t, x_1, \cdots, x_N) \overline{\psi}(t, x'_1, \cdots, x'_N) dx_{i+1} \cdots dx_N
\end{align*}

\subsection{Coherent states}
\label{subsec:coh}

There are alternative
approaches, due to Hepp \cite{hepp}, Ginibre and Velo \cite{G-V}, and, most recently, Rodnianski and Schlein \cite{Rod-S}
which can treat Coulomb potentials $v$.
These approaches rely on studying the Fock space evolution $e^{it H_N} \bpsi_0$ where the initial data $\bpsi_0$
is a coherent state,
\begin{align*}
\bpsi_0=(c_0, c_1 \phi_0(x_1), c_2 \phi_0(x_1) \phi_0(x_2), \cdots)~;
\end{align*}
see \eqref{coherent} below. The evolution \eqref{classical} can then be extracted as a ``Fourier coefficient''
from the Fock space evolution; see \cite{Rod-S}. Under the assumption that $v$ is a Coulomb potential, this approach
leads to strong $L^2$-convergence, still at the level of
the density matrices $\gamma_i^{(N)}$, as we will briefly explain below.

To clarify the issues involved, let us consider the one-particle wave function
$\phi(t, x)$ (to be determined later as the solution of a Hartree equation), satisfying the initial condition
$\phi(0, x) = \phi_0(x)$. Define the skew-Hermitian unbounded operator
\begin{align*}
A(\phi)=a( \overline \phi) - a^*( \phi)
\end{align*}
and the vacuum state $\Omega =(1, 0, 0, \cdots) \in \F$. Accordingly, consider the operator
\begin{align*}
W(\phi) = e^{- \sqrt N A(\phi)}~,
\end{align*}
which is the Weyl operator used by Rodnianski and Schlein~\cite{Rod-S}. The coherent state for the initial data $\phi_0$ is
\begin{align} \notag
\bpsi_0 &=  W(\phi_0)\Omega=e^{- \sqrt N A(\phi_0)} \Omega\\
&= e^{-N \|\phi\|^2/2}\left(1, \cdots, \left(\frac{N^n}{ n!} \right)^{1/2}\phi_0(x_1) \cdots \phi_0(x_n), \cdots \right)~. \label{coherent}
\end{align}
Hence, the top candidate approximation for $ e^{itH_N} \bpsi_0$ reads
\begin{align}
\bpsi_{\mbox{tensor}}(t) =  e^{- \sqrt N A(\phi(t, \cdot))} \Omega~.
\label{firstapprox}
\end{align}
Rodnianski and Schlein~\cite{Rod-S} showed that this approximation works (under suitable assumptions on $v$),
in the sense that
\begin{align*}
&\frac{1}{N}
\|
\left( e^{itH_N} \bpsi_0, \,  a_y^* a_x  e^{itH_N} \bpsi_0 \right)
- \left(e^{- \sqrt N A(\phi(t, \cdot))} \Omega, \,
a^*_y a_x e^{- \sqrt N A(\phi(t, \cdot))}
\Omega  \right)\|_{\mbox{Tr}} \\
&= O(\frac{e^{ C t}}{N})\qquad N\to\infty~;
\end{align*}
the symbol Tr here stands for the trace norm in $x \in \R^3$ and $y \in \R^3$.
The first term in the last relation, including $\frac{1}{N}$, is essentially the density matrix
$\gamma_1^{(N)}(t, x, y)$. For the precise statement of the problem and details of the proof, see
Theorem 3.1 of Rodnianski and Schlein \cite{Rod-S}.

Our goal here is to find  an  explicit approximation for the evolution in the Fock space.
For this purpose, we adopt an idea germane to Wu's
second-order approximation for the $N$-body wave function in Fock space~\cite{wuI,wuII}.

\subsection{Wu's approach}
\label{subsec:wu-pair}

We first comment on the case with periodic boundary conditions, when the condensate is the zero-momentum state.
For this setting, Lee, Huang and Yang~\cite{Lee-H-Y} studied systematically
the scattering of atoms from the condensate to states of opposite momenta. By diagonalizing an
approximation for the Hamiltonian in Fock space, these authors derived a formula for the $N$-particle wave function
that deviates from the usual tensor product, as it expresses excitation of particles from zero monentum to {\it pairs}
of opposite momenta.

For non-periodic settings, Wu~\cite{wuI,wuII} invokes the splitting $a_x=a_0(t)\phi(t,x)+a_{x,1}(t)$
where $a_0$ corresponds to the condensate, $[a_0, a_0^*]=1$, and $a_{x,1}$ corresponds to states
orthogonal to the condensate, $[a_0,a_{x,1}]=0=[a_0,a_{x,1}^*]$.
Wu applies the following ansatz for the $N$-body wave function in Fock space:
\begin{equation}
\mathcal N(t)\,e^{\mathcal P[K_0]}\psi^0_N(t)~,
\label{eq:pair-Psi}
\end{equation}
where $\psi^0_N(t)$ describes the tensor product, $\mathcal N(t)$ is a normalization factor,
and $\mathcal P[K_0]$ is an operator that averages out in space the excitation of particles from the condensate $\phi$ to other states
with the effective kernel (pair excitation function) $K_0$. An explicit formula for $\mathcal P[K_0]$ is
\begin{equation}
\mathcal P[K_0]=[2N_0(t)]^{-1}\int a^*_{x,1}a^*_{y,1}\,K_0(t,x,y)\,a_0(t)^2~,
\end{equation}
where $N_0$ is the expectation value of particle number at the condensate.
This $K_0$ is not a-priori known (in contrast
to the case of the classical Boltzmann gas) but is determined by means consistent with the many-body dynamics.
In the periodic case, \eqref{eq:pair-Psi}
reduces to the many-body wave function of Lee, Huang and Yang~\cite{Lee-H-Y}.

Wu derives a coupled system of dispersive hyperbolic partial differential equations
for $(\phi, K_0)$ via an approximation for the $N$-body Hamiltonian
that is consistent with ansatz~\eqref{eq:pair-Psi}. A feature of this system is the {\it spatially nonlocal} couplings induced by $K_0$.
Observable quantities such that the depletion of the condensate can be
computed directly from solutions of this PDE system. This system has been solved
only in a limited number of cases~\cite{wuII,MargetisI,margetisII}.

\subsection{Scope and outline}
\label{subsec:objec}

Our objective in this work is to find an explicit approximation for the evolution
\begin{align*}
 e^{itH_N} \bpsi_0
\end{align*}
in the Fock space norm, where $\bpsi_0$ is the coherent state~\eqref{coherent}.
This would imply an approximation for the evolution
\begin{equation*}
e^{i t H_{N, N }} \psi_0
\end{equation*}
in $L^2(\R^{3N})$ as $N \to \infty$. To the best of our knowledge, no such approximation
is available in the mathematics or physics literature.
In particular, the tensor product type approximation \eqref{firstapprox}
for $\phi$ satisfying a Hartree equation, as in \cite{Rod-S}, is not
known to be such a Fock space approximation (nor do we expect it to be).

To accomplish our goal,
we propose to modify \eqref{firstapprox}  in two ways. One minor correction is the
multiplication by an oscillatory term. A second correction is a composition with a second-order ``Weyl operator''. Both corrections are inspired by the work of Wu~\cite{wuI,wuII}; see also \cite{MargetisI,margetisII}. However,
our set-up and derived equation is essentially different from these works.

We proceed to describe the second order correction.
Let $k(t, x, y)= k(t, y, x)$ be a function (or kernel) to be determined later,
with $k(0, x, y)=0$. The minimum regularity expected
 of $k$ is $k \in L^2(dx \, dy)$ for a.e. t.

We define the operator
\begin{align}\label{B-op}
B =\frac{1}{2} \int \left(k(t, x, y)a_x a_y - \overline k(t, x, y) a^*_x a^*_y\right)\ dx\, dy~.
\end{align}
Notice that $B$ is skew-Hermitian, i.e., $i B$ is self-adjoint. The operator $e^B$
could be defined by the spectral theorem; see \cite{R-N}.
However, we prefer the more direct approach of defining it first
on the dense subset of vectors with finitely many non-zero components,
where it can be defined by a convergent Taylor series if $\|k\|_{L^2(dx dy)}$ is sufficiently small. Indeed, $B$ restricted to the subspace
of vectors with all entries past the first $N$ identically zero has norm
$\le C N \|k\|_{L^2}$. Then $e^B$ is extended to $\F$ as a unitary operator.

Now we have described all ingredients needed to state our results and derivations.
The remainder of the paper is organized as follows. In section~\ref{sec:stat-proof} we state
our main result and outline its proof. In section~\ref{sec:hartree} we study implications
of the Hartree equation satisfied by the one-particle wave function $\phi(t,x)$.
In section~\ref{sec:algebra} we develop bookkeeping tools of Lie algebra for computing requisite operators containing $B$.
In section~\ref{sec:k-eq} we study the evolution equation for a matrix $K$ that involves the kernel $k$.
In section~\ref{sec:solns} we develop an argument for the existence of solution to the equation
for the kernel $k$. In section~\ref{sec:error-I} we find conditions under which terms involved
in the error term $e^BVe^{-B}$ are bounded. In section~\ref{sec:error-II} we study similarly the
error term $e^B[A,V]e^{-B}$. In section~\ref{sec:trace} we show that we can control traces needed in derivations.

\section{Statement of main result and outline of proof}
\label{sec:stat-proof}

 In this section we state our strategy for general potentials satisfying
 certain properties. Later in the paper we show that
 all assumptions of the related theorem are satisfied locally in time for $v(x)= \chi(x)\frac{\epsilon}{|x|}$, $\epsilon$: sufficiently small,
 and $\chi \in C_0^{\infty}$: even.

\begin{theorem} \label{main}
Suppose that $v$ is an even
potential.
Let $\phi$ be a smooth solution of the Hartree equation
\begin{align}
i \frac{\partial \phi}{\partial t} + \Delta \phi \label{H-F-phi}
+ (v*|\phi|^2)\phi =0
\end{align}
with initial conditions $\phi_0$, and assume the three conditions listed below:
\begin{enumerate}
\item \label{c1}
Assume that we have $k(t, x, y) \in L^2(dx dy)$ for a.e. $t$, where $k$ is symmetric,
and solves
\begin{align} \label{newnlsshort}
(i u_t + u g^T + g u - (1+p)m)  = (i p_t + [g, p] + u \overline m)(1+p)^{-1} u~,
\end{align}
where all products in \eqref{newnlsshort} are interpreted as spatial compositions of kernels, ``$1$" is the identity operator, and
\begin{align}
&u(t, x, y):= \sh:=k + \frac{1}{3!} k \overline k k + \ldots~, \label{defs}\\
&\delta(x-y)+p(t, x, y):= \ch:=\delta(x-y) + \frac{1}{2!} k \overline k + \ldots\notag~,\\
&g(t, x, y):= - \Delta_x \delta (x-y)
-v(x-y) \phi(t, x) \overline \phi(t, y) - (v * |\phi|^2 )(t, x) \notag \delta(x-y)~, \\
&m(t, x, y) := v(x-y)\overline \phi(t, x) \overline \phi(t, y)~. \notag
\end{align}
\item \label{c2}
Also, assume that the functions
\begin{align*}
f(t):= \|e^B[A, V]e^{-B} \Omega\|_{\F}
\end{align*}
and
\begin{align*}
g(t):=\|e^BV e^{-B}\Omega\|_{\F}
\end{align*}
are locally integrable ($V$ is defined  in \eqref{H}).
\item \label{c3}
Finally, assume that $\int d(t, x, x)\ dx$ is locally integrable in time, where
 \begin{align*}
d(t, x, y)=
&
\left(i \sh_t +\sh g^T+ g \sh\right) \shb \notag\\
-&\left(i \ch_t + [g, \ch]\right)\ch\notag\\
-&\sh \overline m \ch -\ch m \shb~.
\end{align*}
\end{enumerate}
Then, there exist  real functions $\chi_0$,  $\chi_1$ such that
\begin{align}
&\| e^{-\sqrt N A(t)} e^{-B(t)} e^{ -i \int_0^t ( N \chi_0(s) + \notag \chi_1(s))ds} \Omega - e^{itH_N} \bpsi_0 \|_{\F} \\
&\le
\frac{\int_0^t f(s) ds}{\sqrt N} +
\frac{\int_0^t g(s) ds}{ N}~.\label{l2est}
 \end{align}

Recall that we defined (see section~\ref{sec:intro})
\begin{align*}
&\bpsi_0= e^{-\sqrt N A(0)} \Omega  \, \,  \mbox{ an arbitrary coherent state (initial data)}~,\\
&A(t)=a( \overline \phi(t, \cdot)) - a^*( \phi(t, \cdot))~,\\
&B(t) =\frac{1}{2} \int \left(k(t, x, y)a_x a_y - \overline k(t, x, y) a^*_x a^*_y\right)\ dx\, dy~.
\end{align*}
\end{theorem}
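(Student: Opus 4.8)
The plan is a Duhamel (energy-method) comparison of the two Fock-space trajectories. Set
\begin{align*}
\mathcal A(t):= e^{-i\int_0^t(N\chi_0(s)+\chi_1(s))\,ds}\;e^{-\sqrt N A(t)}\;e^{-B(t)}\,\Omega .
\end{align*}
Since $k(0,\cdot,\cdot)=0$ forces $B(0)=0$ and the phase is $1$ at $t=0$, one has $\mathcal A(0)=e^{-\sqrt N A(0)}\Omega=\bpsi_0$. Because $e^{itH_N}$ is unitary and, by the construction of $e^B$ recalled in the excerpt, $A,B,H_N$ act on and may be differentiated on the dense subspace of finite-particle vectors, I would write $\mathcal A(t)-e^{itH_N}\bpsi_0=e^{itH_N}\bigl(e^{-itH_N}\mathcal A(t)-\mathcal A(0)\bigr)$ and differentiate $s\mapsto e^{-isH_N}\mathcal A(s)$. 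This reduces \eqref{l2est} to the pointwise estimate
\begin{align*}
\bigl\|\bigl(\partial_s-iH_N\bigr)\mathcal A(s)\bigr\|_{\F}\;\le\;\frac{f(s)}{\sqrt N}+\frac{g(s)}{N}\qquad\text{for a.e.\ }s .
\end{align*}

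To bound the left-hand side I would conjugate $H_N$ and the time derivative successively through $e^{-\sqrt N A}$ and $e^{-B}$, sorting every operator that appears by its degree in $a,a^*$. Using the Rodnianski--Schlein identities $e^{\sqrt N A}a_xe^{-\sqrt N A}=a_x+\sqrt N\,\phi(s,x)$ and its adjoint, together with the terminating expansion $e^{\sqrt N A}(\tfrac1N V)e^{-\sqrt N A}=\tfrac1N V+\tfrac1{\sqrt N}[A,V]+\tfrac12[A,[A,V]]+\cdots$, the conjugate $e^{\sqrt N A}H_Ne^{-\sqrt N A}$ splits into a scalar of size $N$, a scalar of size $1$, a part linear in $a,a^*$ of size $\sqrt N$, a quadratic part $\mathcal H_2=H_0+\tfrac12[A,[A,V]]$ of size $1$, the cubic $\tfrac1{\sqrt N}[A,V]$, and the quartic $\tfrac1N V$. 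The derivatives $\partial_s e^{-\sqrt N A}$ and $\partial_s e^{-B}$ are handled by the differentiated-exponential formula, using that $[A,\dot A]$ is a c-number and, crucially, that the quadratic operators together with the scalars form a Lie algebra preserved under conjugation by $e^{B}$; this symplectic bookkeeping is set up in Section~\ref{sec:algebra}, with the action of $e^{B}$ on $(a,a^*)$ encoded by the kernels $u=\sh$, $p=\ch-\delta$ of \eqref{defs}. Collecting terms in a common conjugated position,
\begin{align*}
\bigl(\partial_s-iH_N\bigr)\mathcal A(s) &= e^{-i\int_0^s(N\chi_0+\chi_1)}\,e^{-\sqrt N A}e^{-B}\\
&\quad\times\Bigl[\mathcal S_N+\mathcal S_1+\mathcal L+\mathcal Q-\tfrac{i}{\sqrt N}e^{B}[A,V]e^{-B}-\tfrac{i}{N}e^{B}Ve^{-B}\Bigr]\Omega ,
\end{align*}
with $\mathcal S_N,\mathcal S_1$ scalars of sizes $N$ and $1$, $\mathcal L$ the $\sqrt N$-linear part, and $\mathcal Q$ the quadratic part.

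It remains to make the first four terms contribute nothing when applied to $\Omega$. The linear part $\mathcal L$ combines the $\sqrt N$-linear piece of $e^{\sqrt N A}H_Ne^{-\sqrt N A}$ with the $-\sqrt N\dot A$ produced by $\partial_s e^{-\sqrt N A}$, and vanishes identically exactly when $\phi$ solves the Hartree equation \eqref{H-F-phi} --- the computation of Section~\ref{sec:hartree}. Applying $\mathcal Q$ to $\Omega$ and normal ordering produces a two-particle-creating operator $\int\alpha(s,x,y)\,a_x^*a_y^*$, terms that annihilate $\Omega$, and a scalar equal to $\int d(s,x,x)\,dx$ with $d$ as in \eqref{c3}; the requirement $\alpha\equiv0$, translated through \eqref{defs}, is precisely the new equation \eqref{newnlsshort} for $k$ (Sections~\ref{sec:algebra}--\ref{sec:k-eq}, with solvability from Section~\ref{sec:solns}). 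Finally $\mathcal S_N+\mathcal S_1$, the scalar $\int d(s,x,x)\,dx$, and the scalars generated by $\partial_s e^{-\sqrt N A}$ and $\partial_s e^{-B}$ --- all of the form $-i\times(\text{real})$ --- are cancelled by choosing real $\chi_0,\chi_1$ so that $N\chi_0(s)+\chi_1(s)$ equals their combined value, after which the phase has modulus one; condition \eqref{c3} makes $\chi_1$ well defined a.e. What survives is $-\tfrac{i}{\sqrt N}$ times $e^{-i\int(\cdots)}e^{-\sqrt N A}e^{-B}\,e^{B}[A,V]e^{-B}\Omega$ together with the analogous $\tfrac1N$ term; since $e^{-\sqrt N A}$ and $e^{B}$ are unitary these have $\F$-norms $f(s)/\sqrt N$ and $g(s)/N$, and integrating in $s$ using \eqref{c2} gives \eqref{l2est}.

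The step I expect to be the main obstacle is the quadratic one: building the symplectic/Lie-algebra calculus so that $e^{B}\mathcal H_2e^{-B}$ and the $\partial_s e^{-B}$ contribution are computed exactly rather than up to uncontrolled remainders, proving that the two-particle-creating part of $\mathcal Q$ vanishes \emph{exactly} when \eqref{newnlsshort} holds, and checking that the residual scalar is genuinely the trace $\int d(s,x,x)\,dx$ --- that is, that the kernels involved are trace class, which is where \eqref{c3} and the estimates of Section~\ref{sec:trace} enter. The remaining analytic work --- solving \eqref{newnlsshort} (Section~\ref{sec:solns}) and bounding $e^{B}Ve^{-B}\Omega$ and $e^{B}[A,V]e^{-B}\Omega$ (Sections~\ref{sec:error-I}--\ref{sec:error-II}) --- is not part of proving this theorem, since \eqref{c1}--\eqref{c3} are hypotheses; it is what one verifies in order to apply the theorem, for instance to $v(x)=\epsilon\chi(x)/|x|$ with $\epsilon$ small and $\chi\in C_0^\infty$ even.
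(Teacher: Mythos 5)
Your proposal is correct and essentially mirrors the paper's argument: conjugate through the unitaries, sort the generator by degree in $a,a^*$, cancel the degree-one part via the Hartree equation, cancel the pair-creation quadratic via \eqref{newnlsshort}, absorb the scalars into $\chi_0,\chi_1$ (with hypothesis \eqref{c3} guaranteeing $\chi_1$ is well defined), and bound the surviving $N^{-1/2}e^B[A,V]e^{-B}\Omega$ and $N^{-1}e^BVe^{-B}\Omega$ by $f$ and $g$ as in \eqref{Lomega}. The only organizational difference is that you apply the fundamental theorem of calculus to $s\mapsto e^{-isH_N}\mathcal{A}(s)$, whereas the paper differentiates $\|e^{i\int_0^t(N\chi_0+\chi_1)}\Psi-\Omega\|^2$ and invokes the Hermiticity of $\widetilde{L}$; these are equivalent bookkeepings of the same energy estimate, with unitarity of the exponentials playing the role that $\widetilde{L}=\widetilde{L}^*$ plays in the paper.
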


A few remarks on Theorem~\ref{main} are in order.

\begin{remark}
Written explicitly, the left-hand side of \eqref{newnlsshort} equals
\begin{align*}
&i u_t + u g^T + g u - (1+p)m=
\left(i \frac{\partial}{\partial t}
- \Delta_x - \Delta_y \right) u(t, x, y)
\\
& - \phi(t, x) \int v(x-z) \overline \phi(t, z) u(t, z, y)\ dz
- \phi(t, y) \int u(t, x, z) v(z-y) \overline \phi(t,z)\ dz\\
&-(v* |\phi|^2)(t, x)u (t, x, y) -(v* |\phi|)^2(t, y)u (t, x, y) \\
&-v(x-y)\overline \phi(t, x) \overline \phi(t, y)\\
&- \overline \phi(t, y) \int(1+p)(t, x, z)v(z-y)\overline \phi(t, z)\ dz~.
\end{align*}
The main term in the right-hand side equals
\begin{align*}
&i p_t + [g, p] + u \overline m
=i \frac{\partial}{\partial t}  p(t, x, y) + \left( - \Delta_x + \Delta_y
\right) p(t, x, y)\\
& - \phi(t, x) \int v(x-z) \overline \phi(t, z) p(t, z, y)\ dz\\
&+ \phi(t, y) \int p(t, x, z) v(z-y) \overline \phi(t, z)\ dz\\
&-(v* |\phi|^2)(t, x)p (t, x, y) +(v* |\phi|)^2(t, y)p (t, x, y) \\
&+ \int u(t, x, z) v(z-y)  \phi(t, z)  \phi(t, x)\ dz~.\\
\end{align*}
\end{remark}
\begin{remark} The algebra, as well as the local analysis presented in this paper do not depend on the sign of $v$. However, the global in time analysis of our equations would require $v$ to be non-positive.
\end{remark}
\begin{remark} Our techniques would allow us to consider more general initial data of the form $\bpsi_0= e^{-\sqrt N A(0)} e^{-B(0)} \Omega$.
For convenience, we only consider the case of tensor products ($B(0)=0$)
in this paper.
\end{remark}

 \begin{proof}
Since $e^{\sqrt N A}$ and $e^B$ are unitary, the left-hand side of \eqref{l2est} equals
\begin{align*}
\|
e^{ i\int_0^t (N \chi_0(s) + \chi_1(s))ds}e^{B(t)} e^{\sqrt N A(t)} e^{itH_N} e^{-\sqrt N A(0)} \Omega -\Omega \|_{\F}~.
\end{align*}
Define
\begin{align*}
\Psi(t)=
e^{B(t)} e^{\sqrt N A(t)} e^{itH} e^{-\sqrt N A(0)} \Omega~.
\end{align*}
In Corollary \ref{maincor} of section~\ref{sec:k-eq} we show that our equations for $\phi$, $k$  insure that
\begin{align*}
\frac{1}{i} \frac{\partial}{\partial t} \Psi =  L \Psi~,
\end{align*}
where $L=\widetilde L - N \chi_0 - \chi_1$
for some
$\widetilde{L}$: Hermitian, i.e.
 $\widetilde{L} = \widetilde{L}^*$, where $\widetilde{L}$ commutes with functions of time, $\chi_0$, $\chi_1$
are real functions of time, and, most importantly (see corollary \ref{maincor} of section~\ref{sec:k-eq} and the remark following it),
\begin{align}
\|\widetilde{L} \Omega\|_{\F} \label{Lomega}
\le N^{-1/2} \|e^B [A, V] e^{-B} \Omega\|_{\F} + N^{-1} \|e^B V e^{-B} \Omega\|_{\F}~.
\end{align}
We apply energy estimates to
\begin{align*}
\left(\frac{1}{i} \frac{\partial}{\partial t}
 -\widetilde{L}\right)
(e^{i \int_0^t (N \chi_0(s) + \chi_1(s))
 ds}
\Psi - \Omega)=\widetilde{L} \Omega ~.
\end{align*}
Explicitly,
\begin{align*}
&\frac{\partial}{\partial t} \left(\|
 (e^{i \int_0^t (N \chi_0(s) + \chi_1)ds}\Psi - \Omega) \|^2_{\F}\right)\\
 &=
 2 \Re \left(\frac{\partial}{\partial t}(e^{i \int_0^t (N \chi_0(s) + \chi_1)ds}\Psi - \Omega),\, e^{i \int_0^t (N \chi_0(s) + \chi_1)ds}\Psi - \Omega \right)\\
 &=
 2 \Re \left(\left(\frac{\partial}{\partial t} - i \widetilde{L}\right)(e^{i \int_0^t (N \chi_0(s) + \chi_1)ds}\Psi - \Omega), \, e^{i \int_0^t (N \chi_0(s) + \chi_1)ds}\Psi - \Omega \right)\\
 &=
 2 \Re \left( i \widetilde{L}\Omega, \, e^{i \int_0^t (N \chi_0(s) + \chi_1)ds}\Psi - \Omega \right)\\
 &\le 2 \left( N^{-1/2} \|e^B [A, V] e^{-B} \Omega\|_{\F} + N^{-1} \|e^B V e^{-B} \Omega\|_{\F}\right)
 \|
 (e^{i \int_0^t (N \chi_0(s) + \chi_1)ds}\Psi - \Omega) \|_{\F}
 ~.
\end{align*}
Thus
\begin{align*}
&\frac{\partial}{\partial t} \|
 (e^{i \int_0^t (N \chi_0(s) + \chi_1)ds}\Psi - \Omega) \|\le
  N^{-1/2} \|e^B [A, V] e^{-B} \Omega\|_{\F} + N^{-1} \|e^B V e^{-B} \Omega\|_{\F}
  ~.
\end{align*}
and \eqref{l2est} holds. This concludes the proof.
\end{proof}
$\square$

\section{The Hartree equation}
\label{sec:hartree}

In this section we see how far we can go by using only the Hartree equation for the one-particle wave function $\phi$.

\begin{lemma} \label{Hartree-F}
The following commutation relations hold (where the $t$ dependence is suppressed, $A$ denotes $A(\phi)$ and $V$ is defined by formula \eqref{H}):
\begin{align}
&[A, V]= \notag
\int v(x-y) \left( \overline \phi (y) a^*_x a_x a_y + \phi (y) a^*_x a^*_y a_x \right)\ dx\, dy\\
&\big[A, [A, V] \big] \label{potential}\\
&=
\int v(x-y) \left(
\overline \phi (y) \overline \phi(x)  a_x a_y + \notag
\phi (y) \phi(x)  a^*_x a^*_y +
2 \overline \phi (y)  \phi(x)  a^*_x a_y \right)\ dx\, dy\\
&+ 2 \int \left(v*|\phi^2|\right)(x) a_x^* a_x\ dx \notag\\
&\Big[A, \big[A, [A, V] \big]\Big] \notag\\
&=6\int \left(v*|\phi^2|\right)(x) \left(\phi(x) a_x^* + \overline \phi(x) a_x \right)\  dx \notag\\
&\bigg[A,\Big[A, \big[A, [A, V] \big]\Big]\bigg] \notag\\
&=12\int \left(v*|\phi^2|\right)(x) |\phi(x)|^2\ dx~. \notag
\end{align}
\end{lemma}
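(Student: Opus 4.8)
The plan is to compute the iterated commutators $[A, V]$, $[A,[A,V]]$, and so on, directly from the definitions, using only the canonical commutation relations \eqref{canonical}, the bilinearity and Jacobi/Leibniz property of the commutator, and the Hartree equation \eqref{H-F-phi} at the very end (if at all — in fact the bare commutator identities here should not require the Hartree equation, only the algebra of the creation/annihilation distributions and the evenness of $v$). First I would record the two elementary Leibniz rules $[a_z, a_x^* a_y^* a_x a_y] = \delta(z-x) a_y^* a_x a_y + \delta(z-y) a_x^* a_x a_y$ and the analogous formula for $[a_z^*, \cdot]$, obtained by repeatedly applying \eqref{canonical} and the fact that $a_z$ commutes with $a_x, a_y$ and $a_z^*$ commutes with $a_x^*, a_y^*$. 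Since $A = \int (\overline{\phi(z)} a_z - \phi(z) a_z^*)\, dz$, the commutator $[A, V]$ is then $\frac{1}{2N}\cdot N$ — wait, with the normalization $V$ carrying the $\frac{1}{2N}$ absorbed — more precisely $[A,V] = \frac12 \int v(x-y)\big([a(\overline\phi), a_x^* a_y^* a_x a_y] - [a^*(\phi), a_x^* a_y^* a_x a_y]\big)\,dx\,dy$; carrying out the $z$-integrations against $\overline{\phi(z)}$ or $\phi(z)$ collapses two of the four terms (the ones that vanish because $a(\overline\phi)$ commutes past $a_x a_y$, and $a^*(\phi)$ past $a_x^* a_y^*$) and, using the symmetry $v(x-y)=v(y-x)$ to combine the two surviving terms in each bracket, yields exactly the stated expression for $[A,V]$.

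For the second commutator I would feed the result for $[A,V]$ — which now contains terms of the three types $a_x a_y$, $a_x^* a_y^*$, $a_x^* a_y$ weighted by $v(x-y)$ times products of $\phi$'s — back into $[A, \cdot]$, again using the elementary rules $[a_z, a_x a_y]=0$, $[a_z, a_x^* a_y]=\delta(z-x)a_y$, $[a_z^*, a_x^* a_y]=-\delta(z-y)a_x^*$, $[a_z, a_x^* a_y^*] = \delta(z-x)a_y^* + \delta(z-y)a_x^*$, and their conjugates. The key bookkeeping point is that the "number-operator-like" term $a_x^* a_y$ produces, upon commuting with $A$, a term of the form $\int (v*|\phi|^2)(x)\,(\cdots)$ after one of the two free variables is integrated against a $\phi$ and one recognizes the convolution $(v*|\phi|^2)(x) = \int v(x-y)|\phi(y)|^2\,dy$; collecting the combinatorial factor of $2$ from the two ways this happens gives the $2\int (v*|\phi|^2)(x)\,a_x^* a_x\,dx$ summand. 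The remaining three summands of $[A,[A,V]]$ come from commuting $A$ with the $a_x^* a_y$ term to produce the $a_x a_y$ and $a_x^* a_y^*$ pieces (each again weighted by $v(x-y)\phi(x)\overline\phi(y)$-type factors, with the factor $2$ in front of $a_x^* a_y$ arising from symmetrization).

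The third and fourth commutators then follow by iterating the same mechanism: applying $[A,\cdot]$ to $[A,[A,V]]$ annihilates the parts that pair up into a scalar-times-pure-creation or pure-annihilation operator of even total degree except through the $a_x^* a_x$ term, which upon one more commutation yields $6\int(v*|\phi|^2)(x)(\phi(x)a_x^* + \overline\phi(x)a_x)\,dx$ (the $6 = 2\cdot 3$ combinatorial factor should be tracked carefully — it is the only place a nontrivial count enters), and one further application converts the creation/annihilation operators into the $c$-number $12\int (v*|\phi|^2)(x)|\phi(x)|^2\,dx$. I would present this as an induction-flavored computation but in practice just do the four cases in sequence. The main obstacle is purely organizational rather than conceptual: keeping the roughly half-dozen $\delta$-contractions and the several cancellations straight in each step, and getting the rational/integer prefactors ($\tfrac12$, $2$, $6$, $12$) right — these are exactly the binomial-type coefficients that show up because each term of a given operator type can be "hit" by $A$ in more than one way. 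A clean way to control this is to organize each intermediate expression by the operator monomial type ($a_x^* a_y^* a_x a_y$, $a_x^* a_y^* a_z$, $a_x^* a_y$, $a_x^* a_y^*$, $a_x a_y$, $a_x^*$, $a_x$, scalar) and simply track how $[A,\cdot]$ maps between these types, which makes the coefficient accounting mechanical.
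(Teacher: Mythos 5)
Your approach is exactly the ``elementary calculation'' the paper leaves to the reader: apply the Leibniz rule to $[A,\cdot]$ using only the canonical commutation relations \eqref{canonical} and the evenness of $v$, organize by monomial type, and track the combinatorial prefactors (I verified that the coefficients $1$, $2$, $6$, $12$ come out as you describe). Your observation that the Hartree equation plays no role here is also correct — this lemma is pure operator algebra.
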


\begin{proof} This is an elementary calculation and is left to the interested reader.
\end{proof}

Now, we consider
$\Psi_1(t)=
e^{\sqrt N A(t)} e^{itH} e^{- \sqrt N A(0)} \Omega $
for which we have the
basic calculation in the spirit of Hepp \cite{hepp}, Ginibre-Velo \cite{G-V}, and Rodnianski-Schlein~\cite{Rod-S};
see equation~(3.7) in \cite{Rod-S}.

\begin{proposition} If $\phi$ satisfies the Hartree equation
\begin{align*}
i \frac{\partial \phi}{\partial t} + \Delta \phi
+ (v*|\phi|^2)\phi =0
\end{align*}
while
\begin{align*}
\Psi_1(t)=
e^{\sqrt N A(t)} e^{itH} e^{- \sqrt N A(0)} \Omega~,
\end{align*}
then $\Psi_1(t)$ satisfies
\begin{align*}
&\frac{1}{i} \frac{\partial}{\partial t} \Psi_1(t)
=  \bigg(H_0
 +\frac{1}{2} [A, [A, V] ]\\
+& N^{-1/2} [A, V] + N^{-1}V
-\frac{N}{2} \int v(x-y) |\phi(t, x)|^2 |\phi(t, y)|^2 dx \, dy\bigg)
\Psi_1(t)~.
\end{align*}
\end{proposition}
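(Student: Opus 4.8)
The plan is to compute the time derivative of $\Psi_1(t)$ directly and then conjugate the Hamiltonian by the Weyl operator $e^{\sqrt N A(t)}$, using the explicit commutators of Lemma~\ref{Hartree-F} together with the Hartree equation to cancel the worst (order $N$ and order $N^{1/2}$) terms. First I would differentiate
\begin{align*}
\Psi_1(t) = e^{\sqrt N A(t)} e^{itH_N} e^{-\sqrt N A(0)} \Omega
\end{align*}
in $t$. The factor $e^{itH_N}$ contributes $iH_N \Psi_1$, and the factor $e^{\sqrt N A(t)}$ contributes a term involving $\sqrt N \dot A(t)$; since $A(t) = a(\overline{\phi(t)}) - a^*(\phi(t))$ is linear in the creation/annihilation operators, $\dot A(t) = a(\overline{\partial_t\phi}) - a^*(\partial_t\phi)$, and the Duhamel-type formula for differentiating $e^{\sqrt N A(t)}$ does not produce a commutator obstruction at leading order because $[\dot A, A]$ is a scalar. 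This gives
\begin{align*}
\frac{1}{i}\frac{\partial}{\partial t}\Psi_1 = \left( e^{\sqrt N A(t)} H_N e^{-\sqrt N A(t)} + \text{(terms from } \sqrt N \dot A\text{)} \right)\Psi_1~.
\end{align*}

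Next I would expand $e^{\sqrt N A} H_N e^{-\sqrt N A}$ using the Baker--Campbell--Hausdorff series. Writing $H_N = H_0 + \frac{1}{N}V$, the conjugation of $H_0 = \int a_x^*\Delta a_x\,dx$ by $e^{\sqrt N A}$ produces $H_0$ plus lower-order pieces that combine with the $\sqrt N \dot A$ terms; crucially, the combination is arranged so that, \emph{using the Hartree equation for $\phi$}, all the genuinely $O(N)$ and $O(\sqrt N)$ contributions involving $a_x, a_x^*$ linearly either cancel or collapse to the scalar $-\frac{N}{2}\int v(x-y)|\phi(x)|^2|\phi(y)|^2\,dx\,dy$. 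For the potential term, since $A$ is linear, the BCH series for $e^{\sqrt N A}\,\frac{1}{N}V\,e^{-\sqrt N A}$ terminates after four commutators (the fourth commutator $[A,[A,[A,[A,V]]]]$ is a scalar, as computed in Lemma~\ref{Hartree-F}), giving
\begin{align*}
\frac{1}{N}e^{\sqrt N A}Ve^{-\sqrt N A} = \frac{1}{N}V + N^{-1/2}[A,V] + \frac{1}{2}[A,[A,V]] + \frac{N^{1/2}}{6}[A,[A,[A,V]]] + \frac{N}{24}[A,[A,[A,[A,V]]]]~.
\end{align*}
The $N^{1/2}$ term here is $\frac{N^{1/2}}{6}\cdot 6\int(v*|\phi|^2)(x)(\phi(x)a_x^* + \overline\phi(x)a_x)\,dx$, which is exactly what is needed to cancel the linear-in-$a$ terms coming from the Hartree equation applied to the $\sqrt N \dot A$ contribution; the $N/24 \cdot 12\int(v*|\phi|^2)|\phi|^2$ scalar combines with the scalar from $H_0$-conjugation to produce the stated $-\frac{N}{2}\int v|\phi|^2|\phi|^2$.

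The main obstacle is the bookkeeping: one has to keep careful track of \emph{all} terms of each order in $\sqrt N$ arising from three sources simultaneously --- the $\sqrt N \dot A(t)$ term, the BCH expansion of $e^{\sqrt N A}H_0 e^{-\sqrt N A}$, and the (terminating) BCH expansion of $N^{-1}e^{\sqrt N A}Ve^{-\sqrt N A}$ --- and verify that the Hartree equation
\begin{align*}
i\partial_t\phi + \Delta\phi + (v*|\phi|^2)\phi = 0
\end{align*}
produces precisely the cancellations that leave only $H_0$, the half-commutator $\frac{1}{2}[A,[A,V]]$, the error terms $N^{-1/2}[A,V] + N^{-1}V$, and the scalar. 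This is the same computation as equation~(3.7) of Rodnianski--Schlein~\cite{Rod-S}; I would organize it by grouping terms according to their power of $N$ and their degree in the operators $a_x, a_x^*$, checking the $O(N)$ scalar identity, the $O(\sqrt N)$ linear-operator identity (where the Hartree equation enters decisively), and noting that the $O(1)$ and negative-order terms are exactly as claimed. No genuinely hard analytic estimate is involved here --- the content is purely algebraic --- so the proof reduces to a disciplined commutator computation.
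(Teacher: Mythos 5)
Your proposal is correct and follows essentially the same route as the paper: differentiate $\Psi_1$, expand both $\bigl(\partial_t e^{\sqrt N A}\bigr)e^{-\sqrt N A}$ and $e^{\sqrt N A}H_Ne^{-\sqrt N A}$ via the (terminating) commutator series, invoke Lemma~\ref{Hartree-F}, and use the Hartree equation --- written as $\frac{1}{i}\dot A + [A,H_0] + \frac{1}{3!}[A,[A,[A,V]]]=0$, together with one further bracket with $A$ --- to cancel the $O(\sqrt N)$ operator terms and collapse the $O(N)$ terms to the scalar $-\tfrac{N}{2}\int v\,|\phi|^2|\phi|^2$. The only cosmetic difference is that the paper organizes the $O(N)$ cancellation explicitly by bracketing the Hartree identity with $A$, which makes transparent that $\tfrac{N}{2i}[A,\dot A]$, $\tfrac{N}{2}[A,[A,H_0]]$ and $\tfrac{N}{4!}[A,[A,[A,[A,V]]]]$ combine into $-\tfrac{N}{2\cdot 3!}[A,[A,[A,[A,V]]]]+\tfrac{N}{4!}[A,[A,[A,[A,V]]]]=-\tfrac{N}{24}[A,[A,[A,[A,V]]]]$, whereas your account of the scalar bookkeeping is slightly looser but points at the same cancellation.
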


\begin{proof}
Recall the formulas
\begin{align*}
\left(\frac{\partial}{\partial t} e^{ C(t)}\right)
\left(e^{-C(t)}\right) = \dot{C} + \frac{1}{2!}[C, \dot{C}]
+ \frac{1}{3!}\big[C, [C, \dot{C}] \big]+ \ldots
\end{align*}
and
\begin{align*}
 e^{C}  H e^{-C}= H + [C, H] +\frac{1}{2!}\big[C, [C, H]\big] + \ldots~.
\end{align*}
Applying these relations to $C=\sqrt N A$ we get

\begin{equation}
\frac{1}{i} \frac{\partial}{\partial t} \psi_1(t) \label{r-s}
=L_1 \psi_1~,
\end{equation}
where
\begin{align*}
&L_1 = \frac{1}{i} \left(\frac{\partial}{\partial t} e^{\sqrt N A(t)}\right)
e^{-\sqrt N A(t)} + e^{\sqrt N A(t)}  H e^{- \sqrt N A(t)}\\
=&\frac{1}{i}\left(  N^{1/2} \dot{A} + \frac{N}{2}[A, \dot{A}]\right)
+  H + N^{1/2} [A, H_0] \\
&+ N^{-1/2} [A, V]
+  \frac{N}{2} \big[A, [A , H_0] \big]    \\
&\frac{1}{2} \big[A, [A, V] \big]
+  \frac{N^{1/2}}{3!}\Big[A, \big[A, [A, V] \big]\Big] +
 \frac{N}{4!}\bigg[A, \Big[A, \big[A, [A, V] \big]\Big]\bigg]~.
\end{align*}

Eliminating the terms with a weight of $\sqrt N$, or setting
\begin{equation} \label{H-F}
\frac{1}{i}
\dot{A} + [A, H_0] + \frac{1}{3!}\Big[A, \big[A, [A, V] \big]\Big] =0~,
\end{equation}
is exactly equivalent to the Hartree equation \eqref{H-F-phi}.
By taking an additional bracket with $A$ in \eqref{H-F}, we have
\begin{align*}
\frac{1}{i}[A, \dot{A}] + \big[A, [A, H_0]\big] + \frac{1}{3!}\bigg[A, \Big[A, \big[A, [A, V] \big]\Big]\bigg] =0~,
\end{align*}
and thus simplify \eqref{r-s} to
\begin{align*}
&\frac{1}{i} \frac{\partial}{\partial t} \psi_1(t)
=  \Bigg( H_0
 +\frac{1}{2} [A, [A, V] ]\\
+& N^{-1/2} [A, V] + N^{-1}V
-N \frac{1}{4!}\bigg[A, \Big[A, \big[A, [A, V] \big]\Big]\bigg]
\Bigg) \psi_1~.
\end{align*}
This concludes the proof.
\end{proof}
$\square$

The first two terms on the right-hand side are the main ones.
The next two terms are $O\left(\frac{1}{\sqrt N}\right)$ and $O\left(\frac{1}{ N}\right)$.
The last term equals
 \begin{align*}
 -\frac{N}{2} \int v(x-y) |\phi(t, x)|^2 |\phi(t, y)|^2 dx \, dy
 := - N \chi_0~.
 \end{align*}

Notice that $\|L_1(\Omega)\|$ is not small because of the presence of $a^*_xa^*_y$ in $
\big[A, [A, V] \big]$. In order to eliminate these terms,
 we introduce $B$ (see \eqref{B-op}) and take
\begin{equation}
\psi=e^B \psi_1~. \notag
\end{equation}
Accordingly, we compute
\begin{equation}
\frac{1}{i} \frac{\partial}{\partial t} \psi = L \psi~, \notag
\end{equation}
where

\begin{align*}
&L=\frac{1}{i}\left(\frac{\partial}{\partial t} e^B\right)e^{-B}
+ e^B L_1
e^{-B}\\
&=L_Q + N^{-1/2} e^B [A, V] e^{-B} + N^{-1} e^B V e^{-B} - N \chi_0~,
\end{align*}
and
\begin{align}
L_Q=\frac{1}{i}\left(\frac{\partial}{\partial t} e^B\right)e^{-B}
+ e^B \left(  H_0
 +\frac{1}{2} \big[A, [A, V] \big] \right)e^{-B} \label{L_Q}
\end{align}
contains all quadratics in the operators $a$, $a^*$.

Equation \eqref{newnlsshort} for $k$ turns out to be
 equivalent to the requirement that
$L$ has no terms of the form $a^* a^*$ .
Terms of the form $a a^*$ will occur,  and will be converted to
$a^* a$ at the expense of $\chi_1$.

In other words, we require that $L_Q$
have no terms of the form $a^* a^*$. For a similar argument (but
for a different set-up), see Wu~\cite{wuII}.

\section{The Lie algebra of ``symplectic matrices"}
\label{sec:algebra}

In this section we describe the bookkeeping tools needed to compute
$L_Q$ of \eqref{L_Q} in closed form.
The results of this section are essentially standard, but they are included
here for the sake of completeness.

We start with the remark that
\begin{align}
&[a(f_1) + a^*(g_1), a(f_2) + a^*(g_2)]=\label{symp} \int f_1g_2 - f_2 g_1\\
&=-\left(
\begin{matrix}
f_1& g_1
\end{matrix}
\right)
 J
\left(
\begin{matrix}
f_2\\
g_2
\end{matrix}\notag
\right)
\end{align}
where
\begin{align*}
J=
\left(
\begin{matrix}
0& -\delta(x-y)\\
\delta(x-y)&0
\end{matrix}
\right).
\end{align*}
This observation explains why we have to invoke symplectic linear algebra.
We thus consider the infinite-dimensional Lie algebra $\mathit{sp}$
of ``matrices'' of the form

\begin{align*}
S(d, k, l)=
\left(
\begin{matrix}
d&k\\
l&-d^T
\end{matrix}
\right)
\end{align*}
for symmetric kernels $k=k(t, x, y)$ and $l=l(t, x, y)$, and arbitrary kernel $d(t, x, y)$.
(The dependence on $t$ will be suppressed when not needed.)
This situation is analogous to the Lie algebra of the finite-dimensional complex
symplectic group, with $x$, $y$ playing the role of $i$ and $j$.
We also consider the Lie algebra $\mathit{Quad}$ of quadratics of the form
\begin{align}
Q(d, k, l):=
&\frac{1}{2}\left( \begin{matrix} \label{metaplectic}
a_x& a_x^*
\end{matrix}
\right)
\left(
\begin{matrix}
d&k\\
l&-d^T
\end{matrix}
\right)
\left(
\begin{matrix}
-a_y^*\\
a_y
\end{matrix}
\right)\\
&=-
 \int d(x, y) \frac{
a_x a_y^* + a_y^* a_x}{2}\ dx\, dy \notag
+  \frac{1}{2}
\int k(x, y)a_x a_y\ dx\, dy\\
&- \frac{1}{2}
\int l(x, y)a_x^* a_y^*\ dx\, dy \notag
\end{align}
($k$, $l$ and $d$ as before). Furthermore, we agree to
identify operators which differ (formally) by a scalar operator.
Thus, $\int d(x, y) a_x a_y^*$ is considered
equivalent to $\int d(x, y)  a_y^* a_x$.
We recall the following result related to the metaplectic representation
(see, e.g. \cite{F}).

\begin{theorem} \label{algebra0}
Let $S=S(d, k, l)$, $Q=Q(d, k, l)$ related as above. Let $f$, $g$ be functions (or distributions).
 Denote
 \begin{align*}
 (a_x, a^*_x)
\left(
\begin{matrix}
f\\
g
\end{matrix}
\right) :=\int \left(f(x) a_x + g(x) a^*_x \right) dx~.
\end{align*}
We have the following commutation relation:
\begin{align}
[Q, (a_x, a^*_x) \label{meta1}
\left(
\begin{matrix}
f\\
g
\end{matrix}
\right)]=(a_x, a^*_x)S\left(
\begin{matrix}
f\\
g
\end{matrix}
\right)
\end{align}
where products are interpreted as compositions. We also have
\begin{align}
e^Q (a_x, a^*_x) \label{metaexp}
\left(
\begin{matrix}
f\\
g
\end{matrix}
\right)e^{-Q}=(a_x, a^*_x)e^S\left(
\begin{matrix}
f\\
g
\end{matrix}
\right)~,
\end{align}
provided that $e^Q$ makes sense as a unitary operator ($Q$: skew-Hermitian).
\end{theorem}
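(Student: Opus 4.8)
The plan is to reduce both identities to the canonical commutation relations \eqref{canonical}. For a column $h=\binom{f}{g}$ write $\Phi(h):=(a_x,a^*_x)\,h=\int\!\big(f(x)\,a_x+g(x)\,a^*_x\big)\,dx$, and recall from \eqref{metaplectic} that, modulo the agreed identification of operators differing by a scalar, $Q=Q(d,k,l)$ is the sum of the three quadratic monomials $-\int d(x,y)\,a^*_y a_x$, $\tfrac12\int k(x,y)\,a_x a_y$ and $-\tfrac12\int l(x,y)\,a^*_x a^*_y$. To establish \eqref{meta1} it suffices, by bilinearity, to commute each of these three monomials against $a(f)$ and against $a^*(g)$. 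Using the Leibniz rule $[XY,Z]=X[Y,Z]+[X,Z]Y$, the relation $[a_x,a^*_z]=\delta(x-z)$, and the fact that the $a$'s (respectively the $a^*$'s) commute among themselves, each such commutator is again linear in $a_x$ and $a^*_x$; assembling the coefficients and using the symmetries $k=k^T$, $l=l^T$ gives
\begin{align*}
[Q,\,\Phi(h)]=(a_x,a^*_x)\,S\,h=\Phi(Sh)~,\qquad Sh=\binom{df+kg}{\,lf-d^Tg}~,
\end{align*}
which is precisely \eqref{meta1}; the scalar ambiguity in the $d$-monomial plays no role here since scalars are central.

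For the exponentiated statement \eqref{metaexp} I would iterate \eqref{meta1}. Since $\mathrm{ad}_Q$ sends $\Phi(h)$ to $\Phi(Sh)$, one gets $\mathrm{ad}_Q^{\,n}\,\Phi(h)=\Phi(S^n h)$ for every $n\ge0$, and hence, summing the series $e^Q(\cdot)e^{-Q}=\sum_n \mathrm{ad}_Q^{\,n}(\cdot)/n!$,
\begin{align*}
e^Q\,\Phi(h)\,e^{-Q}=\sum_{n\ge0}\frac{1}{n!}\,\mathrm{ad}_Q^{\,n}\,\Phi(h)=\Phi\!\left(\sum_{n\ge0}\frac{1}{n!}\,S^n h\right)=(a_x,a^*_x)\,e^S\,h~,
\end{align*}
which is \eqref{metaexp}. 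An equivalent route that avoids summing an operator-valued series directly is to set $h_s:=e^{-sS}h$ and $X(s):=e^{sQ}\,\Phi(h_s)\,e^{-sQ}$; differentiating in $s$ and using \eqref{meta1} gives $X'(s)=e^{sQ}\big([Q,\Phi(h_s)]-\Phi(Sh_s)\big)e^{-sQ}=0$, so $X(1)=X(0)$, i.e.\ $e^Q\,\Phi(e^{-S}h)\,e^{-Q}=\Phi(h)$; replacing $h$ by $e^S h$ recovers \eqref{metaexp}.

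The only genuine difficulty is functional-analytic bookkeeping: $Q$ and $\Phi(h)$ are unbounded, so the Leibniz identity applied inside a commutator, the convergence of $\sum_n\mathrm{ad}_Q^{\,n}\Phi(h)/n!$, and the differentiation of $X(s)$ all have to be justified first on a common invariant dense domain and only then extended. As in the discussion preceding \eqref{B-op}, the natural choice is the dense subspace of Fock vectors with only finitely many nonzero components: on such vectors $Q$, $\Phi(h)$ and their products act as finite combinations of creation and annihilation operators of controllable norm, so the relevant exponential and $\mathrm{ad}$ series converge absolutely and each step above is legitimate there. Since $e^Q$ is assumed to be unitary and, for $f,g\in L^2$, $e^S h$ is again a fixed pair of $L^2$ kernels, the identities established on this dense set extend to the asserted operator identities on $\F$ (for distributional $f,g$ they are read in the obvious pairing sense). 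This domain accounting, rather than any algebraic subtlety, is where the real work lies; the algebra itself is completely forced by \eqref{canonical}.
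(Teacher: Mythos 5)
Your proof is correct. For \eqref{meta1} you take a mildly different route from the paper: you expand $Q$ into its three quadratic monomials and commute each one against $a(f)$ and $a^*(g)$ directly from the CCR, assembling the resulting coefficients. The paper instead computes the commutator of a generic rank-one quadratic $\bigl(a(f_1)+a^*(g_1)\bigr)\bigl(a(f_2)+a^*(g_2)\bigr)$ with $a(f)+a^*(g)$ using the symplectic pairing \eqref{symp}, extends by linearity to $[\,(a_x,a^*_x)R(a_y,a^*_y)^T,\,a(f)+a^*(g)\,]=-(a_x,a^*_x)(R+R^T)J\binom{f}{g}$, and then specializes to $R=\tfrac12 SJ$ with $S^T=JSJ$. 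Your computation is the ``easy direct check'' the paper explicitly offers as an alternative; the paper's argument is marginally slicker because it makes the role of $J$ and of the symplectic Lie algebra structurally transparent, and it is the same rank-one machinery the paper reuses to prove \eqref{conj}. For \eqref{metaexp} you use $\mathrm{Ad}(e^Q)=e^{\mathrm{ad}(Q)}$ together with $\mathrm{ad}_Q\Phi(h)=\Phi(Sh)$, which is exactly the paper's argument; your auxiliary $X(s)=e^{sQ}\Phi(e^{-sS}h)e^{-sQ}$ route is an equivalent and slightly more careful presentation. Your remarks about the invariant dense domain of finitely supported Fock vectors and the harmlessness of the scalar ambiguity in the $d$-term are both consistent with the paper's own ``Note on rigor.''
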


\begin{proof}
The commutation relation \eqref{meta1} can be easily checked directly,
but we point out that it follows from \eqref{symp}.
In fact, using \eqref{symp}, for any rank one quadratic we have
\begin{align*}
&[\left(a(f_1) + a^*(g_1)\right)\left(a(f_2)+ a^*(g_2)\right), a(f)+a^*(g)]\\
&=-\left( \begin{matrix}
a_x& a^*_x
\end{matrix}
\right)
\left(
\left(
\begin{matrix}
f_2\\
g_2
 \end{matrix} \right)
\left(
\begin{matrix}
f_1&
g_1
 \end{matrix}
 \right) +
\left( \begin{matrix}
f_1\\
g_1
 \end{matrix}
 \right)
\left( \begin{matrix}
f_2&
g_2
\end{matrix}
\right)
\right)
J
\left( \begin{matrix}
f\\
g
 \end{matrix}
 \right)~.
\end{align*}
Thus, for any $R$ we have
\begin{align*}
[
\left( \begin{matrix}
a_x& a^*_x
\end{matrix}
\right) R
\left( \begin{matrix}
a_y\\ a^*_y
\end{matrix}
\right)
, a(f) + a^*(g)]
=-
\left( \begin{matrix}
a_x& a^*_x
\end{matrix}
\right) \left( R + R^T \right) J
\left( \begin{matrix}
f\\
g
\end{matrix}
\right)~.
\end{align*}
Now specialize to $R=\frac{1}{2} S J$, $S \in \mathit{sp}$, and use
$S^T=JSJ$ to complete the proof.

The second part, equation \eqref{metaexp}, follows from the identity
\begin{align*}
 e^{Q}  C e^{-Q}= C + [Q, C] +\frac{1}{2!}\big[Q, [Q, C]\big] + \ldots~,
\end{align*}
or, in the language of adjoint representations,
$\mbox{Ad}(e^Q)(C)= e^{\mbox{ad}(Q)}(C)$, which is applied to
$C=a(f) + a^*(g)$.
\end{proof}
$\square$

A closely related result is provided by the following theorem.

\begin{theorem} \label{algebra}
\begin{enumerate}
\item The linear map $\I: \mathit{sp} \to \mathit{Quad}$
defined by

\begin{align*}
S(d, k, l) \to Q(d, k, l)
\end{align*}
is a Lie algebra isomorphism.

\item Moreover, if $S=S(t)$, $Q=Q(t)$ and $\I(S(t))=Q(t)$
is skew-Hermitian, so that $e^Q$ is well defined,
we have

\begin{align}
\I\left(\left(\frac{\partial}{\partial t} e^S\right) e^{-S}\right)
=
\left(\frac{\partial}{\partial t} e^Q\right) e^{-Q} \label{diff}~.
\end{align}

\item  Also, if $R \in \mathit{sp}$, we have
\begin{align}
\I\left( e^S R e^{-S}\right) =  e^Q \I( R) e^{-Q} \label{conj}~.
\end{align}

\end{enumerate}

\end{theorem}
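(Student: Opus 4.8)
The plan is to reduce all three statements to the single structural fact that $\I$ is a homomorphism of Lie algebras, and to extract that fact from Theorem~\ref{algebra0} together with the Jacobi identity. Once linearity and the bracket identity $\I([S_1,S_2])=[\I(S_1),\I(S_2)]$ are in hand, parts (2) and (3) follow by applying $\I$ term by term to the standard Lie series for $(\partial_t e^{C})e^{-C}$ and $e^{C}Re^{-C}$.

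\emph{Part (1).} Linearity of $\I$ is immediate from the definition \eqref{metaplectic} of $Q(d,k,l)$, and one recalls that $\mathit{sp}$ is closed under the bracket (the relation $S^T=JSJ$, already used in the proof of Theorem~\ref{algebra0}). For injectivity, the $\int k\,a_xa_y$, $\int l\,a^*_xa^*_y$, and $\int d\,a^*_xa_y$ pieces of $Q(d,k,l)$ involve disjoint ``types'' of operator-valued distributions, so that — after using the agreed identification of $\int d\,a_xa^*_y$ with $\int d\,a^*_ya_x$ (which alters $Q$ only by a scalar) and the symmetry of $k,l$ — vanishing of $Q(d,k,l)$ modulo scalars forces $d=k=l=0$; surjectivity is the definition of $\mathit{Quad}$. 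The substantive point is the bracket identity, which I would obtain from Theorem~\ref{algebra0}: by \eqref{meta1}, for $S$ and $Q=\I(S)$ the operator $\mathrm{ad}(Q)$ carries the linear operator $(a_x,a^*_x)\binom{f}{g}$ to $(a_x,a^*_x)S\binom{f}{g}$. Hence, for $S_1,S_2$ with $Q_i=\I(S_i)$, the Jacobi identity gives $[[Q_1,Q_2],(a_x,a^*_x)\binom{f}{g}]=(a_x,a^*_x)(S_1S_2-S_2S_1)\binom{f}{g}=(a_x,a^*_x)[S_1,S_2]\binom{f}{g}$. Since the commutator of two quadratic expressions in $a,a^*$ is again quadratic up to an additive scalar (a degree count in the canonical commutation relations \eqref{canonical}), $[Q_1,Q_2]\in\mathit{Quad}$ modulo scalars; and two elements of $\mathit{Quad}$ whose adjoints act identically on all linear operators differ by a scalar (the only operators commuting with every $a_x$ and $a^*_x$ are scalars). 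Applying \eqref{meta1} to $[S_1,S_2]\in\mathit{sp}$ shows $\I([S_1,S_2])$ has the same adjoint action on linear operators, so $[Q_1,Q_2]=\I([S_1,S_2])$.

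\emph{Parts (2) and (3).} Use $(\partial_t e^{C})e^{-C}=\sum_{n\ge 0}\frac{1}{(n+1)!}(\mathrm{ad}\,C)^n(\dot C)$ and $e^{C}Re^{-C}=\sum_{n\ge 0}\frac{1}{n!}(\mathrm{ad}\,C)^n(R)$, first with $C=S\in\mathit{sp}$ and then with $C=Q=\I(S)$ (the latter as operator identities on the dense domain of Fock vectors with finitely many nonzero components, where $e^{Q}$ is given by a convergent Taylor series for $\|k\|_{L^2}$ small, exactly as for $e^{B}$ in section~\ref{sec:intro}). Each summand $(\mathrm{ad}\,S)^n(\dot S)$, respectively $(\mathrm{ad}\,S)^n(R)$, lies in $\mathit{sp}$, and by Part (1) and linearity $\I$ sends it to $(\mathrm{ad}\,Q)^n(\dot Q)$, respectively $(\mathrm{ad}\,Q)^n(\I(R))$. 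Summing the two series term by term yields \eqref{diff} and \eqref{conj}.

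\emph{Main obstacle.} The Lie-algebraic bookkeeping is routine; the real work is the operator-analytic justification beneath it: that the commutator of two elements of $\mathit{Quad}$ genuinely remains in $\mathit{Quad}$ modulo scalars (no cubic or quartic remainder), that the exponential series above converge on a suitable dense domain under the smallness hypotheses, and that skew-Hermiticity of $Q=\I(S)$ (i.e.\ $Q^*=-Q$) is precisely what promotes $e^{Q}$ to a genuine unitary, so that the term-by-term identities in (2)--(3) are operator equations rather than merely formal ones. Maintaining the ``modulo scalars'' identification consistently is the one place requiring care, though it is harmless since scalars are central and $\I$ ignores them.
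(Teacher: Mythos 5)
Your proposal is correct and reaches the same conclusions, but the core step (showing $\I$ preserves brackets) goes by a genuinely different route from the paper's. The paper verifies the homomorphism property \emph{directly}: it introduces the quadratic building blocks $Q_{xy}=a_xa_y$, $Q^*_{xy}=a^*_xa^*_y$, $N_{xy}=\tfrac12(a_xa^*_y+a^*_ya_x)$, lists their commutation relations \eqref{QQ^*}--\eqref{NN}, and checks one representative bracket computation (the $k$--$l$ case) against the corresponding matrix bracket, leaving the remaining cases as ``similar.'' You instead argue \emph{indirectly}: you take \eqref{meta1} from Theorem~\ref{algebra0} as giving a faithful action of $\mathit{Quad}/\text{scalars}$ on linear expressions, and then push the bracket identity through by Jacobi together with the observation that a quadratic commuting with every $a_x,a^*_x$ must be scalar. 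That argument is sound and more conceptual, but it does load extra weight onto the faithfulness claim — which you would need to justify with a short degree-counting argument in the CCR, whereas the paper's route exposes the commutation relations \eqref{QQ^*}--\eqref{NN} explicitly and those are then reused in section~\ref{errors1}, so the explicitness pays off later. For parts (2) and (3) you and the paper do exactly the same thing (term-by-term application of $\I$ to the Lie series). One small difference: the paper additionally gives an independent proof of \eqref{conj} by first checking it on rank-one $R$ directly from \eqref{metaexp}, which your write-up omits; that is harmless since your series argument also covers \eqref{conj}. You also spend effort on injectivity/surjectivity of $\I$, which the paper does not bother to spell out; that is fine but not strictly needed for what follows.
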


\begin{remark}

In the finite-dimensional case, this is (closely related to) the
 ``infinitesimal metaplectic representation''; see
p. 186 in \cite{F} . In the infinite dimensional case, we must be careful, as some of our operators are not of trace class. For instance,
$\int a_x  a_x^*$ does not make sense.

\end{remark}

\begin{proof} First, we point out that \eqref{metaexp}
implies \eqref{conj}, at least in the case where $R$ is the ``rank one" matrix
\begin{align*}
R=
\left(
\begin{matrix}
f\\
g
\end{matrix}
\right)
\left(
\begin{matrix}
h & i
\end{matrix}
\right)~.
\end{align*}

Notice that \eqref{metaexp} can also be written as
\begin{align*}
e^Q
\left(
\begin{matrix}
f & g
\end{matrix}
\right)
\left(
\begin{matrix}
a_x\\
a^*_x
\end{matrix}
\right)e^{-Q}=
\left(
\begin{matrix}
f & g
\end{matrix}
\right)
e^{S^T}\left(
\begin{matrix}
a_x\\
a^*_x
\end{matrix}
\right)~.
\end{align*}

In conclusion, we find
\begin{align*}
&e^Q
\left( \begin{matrix}
a_x& a_x^*
\end{matrix}
\right)
R
\left(
\begin{matrix}
-a^*_y\\
a_y
\end{matrix}
\right)e^{-Q}\\
&=e^Q
\left( \begin{matrix}
a_x& a_x^*
\end{matrix}
\right)
\left(
\begin{matrix}
f\\
g
\end{matrix}
\right)
\left(
\begin{matrix}
h & i
\end{matrix}
\right) J
\left(
\begin{matrix}
a_y\\
a^*_y
\end{matrix}
\right)e^{-Q}\\
&=e^Q
\left( \begin{matrix}
a_x& a_x^*
\end{matrix}
\right)
\left(
\begin{matrix}
f\\
g
\end{matrix}
\right) e^{-Q} e^{Q}
\left(
\begin{matrix}
h & i
\end{matrix}
\right) J
\left(
\begin{matrix}
a_y\\
a^*_y
\end{matrix}
\right)e^{-Q}\\
&=
\left( \begin{matrix}
a_x& a_x^*
\end{matrix}
\right) e^S
\left(
\begin{matrix}
f\\
g
\end{matrix}
\right)
\left(
\begin{matrix}
h & i
\end{matrix}
\right) J e^{J S J}
\left(
\begin{matrix}
a_y\\
a^*_y
\end{matrix}
\right)\\
&=
\left( \begin{matrix}
a_x& a_x^*
\end{matrix}
\right) e^S
R e^{-S}
\left(
\begin{matrix}
-a^*_y\\
a_y
\end{matrix}
\right)
\end{align*}
since
 $S^T=J S J$ if
 $S \in \mathit{sp}$, and
 $J e^{JSJ}= e^{-S}J$.

We now give a direct proof that \eqref{metaplectic} preserves Lie brackets.
Denote the quadratic building blocks by
$Q_{xy}= a_x a_y$, $Q^*_{xy}= a^*_x a^*_y$, $N_{xy}=\frac{1}{2} \left(a_xa^*_y + a^*_y a_x \right)$.
One can verify the following commutation relations, which will be also needed below:
\begin{align}
\big[Q_{xy}, Q^*_{zw} \big]&= \delta(x-z) N_{yw} \label{QQ^*}
+ \delta(x-w) N_{yz} + \delta(y-z) N_{xw} + \delta(y-w) N_{xz}~,\\
\big[Q_{xy}, N_{zw} \big]&= \delta(x-w) Q_{yz} + \delta(y-w) Q_{xz}
\label{QN}~,\\
\big[N_{xy}, Q^*_{zw} \big]&= \delta(x-z) Q^*_{yw} + \delta(x-w) Q_{yz}
\label{NQ^*}~,\\
\big[N_{xy}, N_{zw} \big]&= \delta(x-w) N_{zy} -\delta(y-z)N_{xw}~.\label{NN}
\end{align}

Using \eqref{QQ^*} we compute
\begin{align*}
\Big[\frac{1}{2}\int k(x, y)a_xa_y dx dy, - \frac{1}{2} \int l(x, y) a^*_x a^*_y dx dy \Big] = - \int(kl)(x, y) N_{xy}\ dx\, dy~,
\end{align*}
which corresponds to the relation
\begin{align*}
&\Bigg[
\left(
\begin{matrix}
0&k\\
0&0
\end{matrix}
\right),
\left(
\begin{matrix}
0&0\\
l&0
\end{matrix}
\right) \Bigg]\\
&=
\left(
\begin{matrix}
kl&0\\
0&-lk
\end{matrix}
\right).
\end{align*}
The other three cases are similar.

To prove \eqref{diff}, expand both the left-hand side and the right-hand side as

\begin{align*}
&\I\left(\left(\frac{\partial}{\partial t} e^S\right) e^{-S}\right)\\
&=\I\left( \dot{S} + \frac{1}{2} [S, \dot{S}] + \cdots \right)\\
&= \dot{Q} + \frac{1}{2} [Q, \dot{Q}] + \cdots \\
&=\left(\frac{\partial}{\partial t} e^Q\right) e^{-Q}~.
\end{align*}

The proof of \eqref{conj} is along the same lines.
\end{proof}
$\square$

\remark{Note on rigor}: All the Lie algebra results
that we have used are standard in the finite-dimensional case. In our applications, $S$ will be $K$ where $K$ is a matrix of the form
\eqref{B}, see below,
and
$Q$ will be $B=\I(K)$.
The unbounded operator $B$ is skew-Hermitian and $e^{B} \psi$ is defined by a convergent Taylor series if $\psi \in \F$ has only finitely many non-zero components, provided $\|k(t, \cdot, \cdot)\|_{ L^2(dx \, dy)}$ is small .
We then extend $e^B$ to all $\F$ as a unitary operator.
The norm $\|k(t, \cdot, \cdot)\|_{ L^2(dx \, dy)}$ iterates under compositions; thus, the kernel $e^K$ is well defined
by its convergent Taylor expansion. In the expression
\begin{align}
e^B P e^{-B} = P + [B, P] + \ldots \label{rigor}
\end{align}
for $P$, a first- or second-order polynomial in $a$, $a^*$, we point out that the right-hand side stays a polynomial of the same degree, and converges
when applied to a Fock space vector with finitely many non-zero components.
For our application, we need to know \eqref{rigor} is true when applied to $\Omega$.
The same comment applies to the series
\begin{align*}
\left(\frac{\partial}{\partial t} e^B \right) e^{-B}=\dot{B} +
\frac{1}{2}[B, \dot{B}]+ \ldots~.
\end{align*}

\section{Equation for kernel $k$}
\label{sec:k-eq}

Now apply the isomorphism of the previous section to the operator
\begin{align*}
B=\I(K)
\end{align*}
for
\begin{align}
K= \left(
\begin{matrix}
0&k(t, x, y)\\ \label{B}
\overline k(t, x, y)&0
\end{matrix}
\right)~.
\end{align}
This agrees to the letter with the isomorphism \eqref{metaplectic}.
The next two isomorphisms,
 \eqref{H_0} and \eqref{aav},
 require special treatment because  $a a^*$ terms mirroring the
$a^*a$ terms are missing in \eqref{H}, \eqref{potential}. However,
the discrepancy only happens on the diagonal.
Once the relevant terms are commuted with $B$, they fit the pattern exactly.
It isn't quite true that
\begin{align}
H_0=
&\I \left(\left(
\begin{matrix}
-(\Delta \delta)(x-y)&0\\ \notag
0&(\Delta \delta)(x-y)\\
\end{matrix}
\right)\right)\\
=&\I \left(\left(
\begin{matrix} \label{H_0}
-\Delta &0\\
0&\Delta \\
\end{matrix}
\right)\right)
\end{align}
since, strictly speaking,
\begin{align*}
&\I \left(\left(
\begin{matrix}
-(\Delta \delta)(x-y)&0\\ \notag
0&(\Delta \delta)(x-y)\\
\end{matrix}
\right)\right)
= \int \frac{a^*_x \Delta a_x + a_x \Delta a^*_x}{2} dx
\end{align*}
is undefined.
However, one can compute directly that
 $[ \Delta_x a_x, a^*_y]= (\Delta \delta) (x-y) ~.$
Using that, we compute
\begin{align*}
[B, H_0]
= \frac{1}{2} \int \left( (\Delta_x + \Delta_y)k(x, y) a_x a_y
+ (\Delta_x + \Delta_y)\overline k(x, y) a_x^* a_y^*\right)\ dx\, dy~.
\end{align*}

This commutator is in agreement with \eqref{B}, \eqref{H_0}, and the result can be represented in accordance with \eqref{metaplectic}, namely

\begin{align*}
[B, H_0]=
\I\left(\bigg[
\left(
\begin{matrix}
0& k \\
\overline k  &0\\
\end{matrix}\right),
\left(
\begin{matrix}
-(\Delta \delta)(x-y)&0\\ \notag
0&(\Delta \delta)(x-y)\\
\end{matrix}
\right)\bigg]
\right)~.
\end{align*}
We also have
\begin{align*}
&e^B H_0 e^{-B} - H_0\\
&=\I \left(e^K\left(
\begin{matrix}
-(\Delta \delta)(x-y)&0\\ \notag
0&(\Delta \delta)(x-y)\\
\end{matrix}
\right) e^{-K} -
\left(
\begin{matrix}
-(\Delta \delta)(x-y)&0\\ \notag
0&(\Delta \delta)(x-y)\\
\end{matrix}
\right) \right)
\end{align*}
since $e^B H_0 e^{-B} - H_0= [B, H_0] + \frac{1}{2}\big[B, [B, H_0] \big] +\cdots$
The same comment applies to the diagonal part of
\begin{align}
&\frac{1}{2} \big[A, [A, V] \big]= \notag\\
&\I\left(
\begin{matrix}
-v_{12} \overline \phi_1  \phi_2 - \left(v * |\phi|^2 \right) \delta _{12}
&v_{12}\overline \phi_1 \overline \phi_2\\
-v_{12}\phi_1\phi_2&v_{12}  \phi_1 \overline \phi_2 +
\left(v * |\phi|^2 \right) \delta_{12} \label{aav}
\\
\end{matrix}
\right),
\end{align}
where
 $v_{12}  \phi_1  \phi_2 $ is an abbreviation for the product
$v(x-y) \phi(x) \phi(y)$, etc. Formula \eqref{aav} isn't quite true either,
but becomes true after commuting with $B$.

To apply our isomorphism, we quarantine the ``bad" terms
in \eqref{H_0} and the diagonal part of \eqref{aav}.
Define
\begin{align*}
G=\left(
\begin{matrix}
g & 0\\
0 &- g^T
\end{matrix}
\right) \qquad
\mbox{
and} \qquad
M=\left(
\begin{matrix}
0& m\\
- \overline m &0
\end{matrix}
\right)
\end{align*}
where
\begin{align*}
&g= - \Delta \delta_{12}
-v_{12} \overline \phi_1  \phi_2 - (v * |\phi|^2 ) \delta_{12}~, \\
&m = v_{12}\overline \phi_1 \overline \phi_2~,
\end{align*}
and split
\begin{align*}
H_0
 +\frac{1}{2} \big[A, [A, V] \big]= H_G + \I(M)
 \end{align*}
 where
\begin{align}
H_G=H_0 +
&\int v(x-y) \notag
\overline \phi (y)  \phi(x)  a^*_x a_y \ dx\, dy \\
+&  \int \left(v*|\phi^2|\right)(x) a_x^* a_x\ dx~.\label{H_G}
\end{align}

By the above discussion we have
\begin{align*}
&[B, H_G]=\I([K, G]) \qquad \mbox{and}\\
&[e^B, H_G]e^{-B}=\I([e^K, G]e^{-K})~.
\end{align*}
Write
\begin{align}\notag
L_Q=&\frac{1}{i}\left(\frac{\partial}{\partial t} e^B\right) e^{-B}
\notag\\
+& e^B \left(  H_0
 +\frac{1}{2} \big[A, [A, V] \big] \right)e^{-B}
\notag\\
=&\frac{1}{i}\left(\frac{\partial}{\partial t} e^B\right) e^{-B}
\notag\\
+& H_G +[ e^B, H_G]e^{-B} + e^B \I(M) e^{-B}
\notag\\
=&  H_G+\I\left(
\left(\frac{1}{i} \frac{\partial}{\partial t}
e^{K}\right) e^{-K}
+[e^{K}, G]e^{-K} + e^K M e^{-K} \right)\notag\\
&=H_G + \I(\mathcal{M}_1+\mathcal{M}_2+\mathcal{M}_3)
 \label{finalform}~.
\end{align}

Notice that
if $K$ is given by \eqref{B}, then
\begin{align*}
e^{K}=
\left(
\begin{matrix}
\ch & \sh\\
\shb & \chb\\
\end{matrix}
\right),
\end{align*}
where
\begin{align}\label{ch}
\ch  = I + \frac{1}{2} k \overline{k} + \frac{1}{4!}
k \overline{k} k \overline{k} + \ldots~,
\end{align}
and similarly for $\sh$. Products are interpreted, of course, as compositions
of operators.

We compute
\begin{align*}
& \mathcal{M}_1=
\frac{1}{i}
 \left(
\begin{matrix}
\ch_t &\sh_t\\
\shb_t & \chb_t
\end{matrix}
\right)
 \left(
\begin{matrix}
\ch &-\sh\\
-\shb & \chb
\end{matrix}
\right)\\
=&\frac{1}{i}
 \left(
\begin{matrix}
\ch_t \ch - \sh_t \shb &-\ch_t \sh + \sh_t \chb \\
* & *
\end{matrix}
\right)
\end{align*}
\begin{align*}
[e^{K}, G]
=
\left(
\begin{matrix}
[\ch, g] & \, \, \, -\sh g^T -g \sh\\
* & *
\end{matrix}
\right)
\end{align*}

and

\begin{align*}
&\mathcal{M}_2=
[e^{K}, G]e^{-K}=\\
&
\left(
\begin{matrix}
[\chh, g] \chh + (\shh g^T+ g \shh) \shhb
&\, \, \,  -[\chh, g] \shh -( \shh g^T +  g \shh) \chhb\\
* & *
\end{matrix}
\right)~,
\end{align*}
where $\shh$ is an abbreviation for $\sh$, etc,
and
\begin{align*}
&\mathcal{M}_3=
e^K M e^{-K}=
\left(
\begin{matrix}
-\shh \overline m \chh -\chh m \shhb& \, \, \, \shh \overline m \shh + \chh m \chhb\\
*&*
\end{matrix}
\right)~.
\end{align*}
Now define
\begin{align*}
\mathcal{M} =\mathcal{M}_1+\mathcal{M}_2+\mathcal{M}_3~.
\end{align*}
We have proved the following theorem.

\begin{theorem} Recall
  the
isomorphism \eqref{metaplectic} of Theorem \ref{algebra}.

\begin{enumerate}
\item If $L_Q$ is given by \eqref{L_Q}, then
\begin{align}
L_Q=& H_0 +
\int v(x-y)
\overline \phi (y)  \phi(x)  a^*_x a_y \ dx\, dy \label{G}\\
+&  \int \left(v*|\phi^2|\right)(x) a_x^* a_x\ dx
+\I\left( \mathcal{M} \right)~.\notag
\end{align}
\item
The coefficient of $a_x a_y$ in
$ \I \left(\mathcal{M} \right)$ is
 $-\mathcal{M}_{12}$ or
\begin{align*}
&(i \sh_t + \sh g^T + g \sh)\chb -(i \ch_t -[\ch, g])\sh\\
&- \sh \overline m \sh - \ch m \chb~.
\end{align*}
\item
The coefficient of $a^*_x a^*_y $
equals minus the complex conjugate
of the coefficient  of $a_x a_y$.
\item
The coefficient of $- \frac{a_x a^*_y + a^*_y a_x}{2}$ is
$\mathcal{M}_{11} $, or
\begin{align}
d(t, x, y)=
&
\left(i \sh_t +\sh g^T+ g \sh\right) \shb \notag\\
-&\left(i \ch_t + [g, \ch]\right)\ch\notag\\
-&\sh \overline m \ch -\ch m \shb~. \label{trace}
\end{align}
\end{enumerate}

\end{theorem}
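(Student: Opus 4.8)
The plan is to read the theorem off from the two isomorphism identities \eqref{diff} and \eqref{conj} of Theorem~\ref{algebra}, after putting $H_0+\frac12\big[A,[A,V]\big]$ in a form on which $\I$ acts. First I would substitute the decomposition $H_0+\frac12\big[A,[A,V]\big]=H_G+\I(M)$, recorded before the statement, into the definition \eqref{L_Q} of $L_Q$, getting
\begin{align*}
L_Q=\frac{1}{i}\Big(\frac{\partial}{\partial t}e^B\Big)e^{-B}+H_G+[e^B,H_G]e^{-B}+e^B\,\I(M)\,e^{-B}.
\end{align*}

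The one place where something beyond bookkeeping is needed is that neither $H_0$ nor $H_G$ is literally $\I$ of its symbol matrix, since the diagonal operators $\int\frac12(a^*_x\Delta a_x+a_x\Delta a^*_x)$ and $\int\frac12(v*|\phi|^2)(a^*_xa_x+a_xa^*_x)$ that $\I$ would manufacture are ill-defined. I would handle this as in Section~\ref{sec:k-eq}: the missing $aa^*$ companion of a present $a^*a$ term differs from it only by a (formally infinite) scalar, and one commutator with $B$ kills scalars; concretely $[\Delta_xa_x,a^*_y]=(\Delta\delta)(x-y)$, together with the analogous identity for the potential diagonal term, shows that $[e^B,H_G]e^{-B}=\I\big([e^K,G]e^{-K}\big)$ is an honest quadratic, $G$ being the quarantined matrix with diagonal blocks $g$, $-g^T$. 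Then \eqref{diff} gives $\frac1i(\partial_te^B)e^{-B}=\I\big(\frac1i(\partial_te^K)e^{-K}\big)=\I(\mathcal{M}_1)$ and \eqref{conj} gives $e^B\I(M)e^{-B}=\I(e^KMe^{-K})=\I(\mathcal{M}_3)$, so by linearity $L_Q=H_G+\I(\mathcal{M}_1+\mathcal{M}_2+\mathcal{M}_3)=H_G+\I(\mathcal{M})$. Since $H_G=H_0+\int v(x-y)\overline\phi(y)\phi(x)a^*_xa_y\,dx\,dy+\int(v*|\phi^2|)(x)a^*_xa_x\,dx$ by \eqref{H_G}, this is part~(1).

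For parts~(2)--(4) I would multiply out the three matrices entrywise. Because $K$ in \eqref{B} is off-diagonal, the $\cosh$-series \eqref{ch} is even and the $\sinh$-series odd under composition, so
\begin{align*}
e^K=\begin{pmatrix}\ch & \sh\\ \shb & \chb\end{pmatrix},\qquad e^{-K}=\begin{pmatrix}\ch & -\sh\\ -\shb & \chb\end{pmatrix},
\end{align*}
and computing $\mathcal{M}_1=\frac1i(\partial_te^K)e^{-K}$, $\mathcal{M}_2=[e^K,G]e^{-K}$, $\mathcal{M}_3=e^KMe^{-K}$ block by block and summing produces the $2\times2$ kernel $\mathcal{M}$: its $(1,1)$ block is $\mathcal{M}_{11}=d(t,x,y)$ of \eqref{trace}, and its $(1,2)$ block is, up to the overall sign built into \eqref{metaplectic}, the expression displayed in part~(2). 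Transcribing through \eqref{metaplectic}, which attaches the $(1,1)$ block to the term $-\frac{a_xa^*_y+a^*_ya_x}{2}$ and the $(1,2)$ block to $a_xa_y$, then gives~(2) and~(4). For~(3) I would observe that the $(2,1)$ block of $\mathcal{M}$ is the negative complex conjugate of its $(1,2)$ block -- this is forced by the reality relations $\shb=\overline{\sh}$, $\chb=\overline{\ch}$ together with the conjugation symmetries of $G$ and $M$ (equivalently, by $\I(\mathcal{M})=L_Q-H_G$ being self-adjoint) -- so that via \eqref{metaplectic} the coefficient of $a^*_xa^*_y$ is minus the conjugate of the coefficient of $a_xa_y$.

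The main -- essentially the only -- obstacle is the rigor issue flagged above and discussed in the Note on rigor of Section~\ref{sec:algebra}: one must verify that every iterated commutator $[B,H_G]$, $\big[B,[B,H_G]\big],\dots$ and the series $(\partial_te^B)e^{-B}$, $e^B\I(M)e^{-B}$ remain genuine first- or second-order polynomials in $a$, $a^*$ that converge when applied to Fock-space vectors with finitely many nonzero components, so that \eqref{diff} and \eqref{conj} may be used termwise. This is guaranteed as long as $\|k(t,\cdot,\cdot)\|_{L^2(dx\,dy)}$ is small, since that norm iterates under compositions. Once this is in hand, the rest is finite-dimensional-style symplectic matrix algebra and is routine.
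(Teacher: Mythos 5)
Your proposal is correct and follows essentially the same route as the paper: the same decomposition $H_0+\tfrac12[A,[A,V]]=H_G+\I(M)$, the same invocation of \eqref{diff} and \eqref{conj} to pull everything inside a single $\I(\cdot)$, the same handling of the ill-defined diagonal $aa^*$ terms by ``quarantining'' them in $H_G$ so that only commutators with $B$ (which kill the formal scalar) appear, and the same block-by-block computation of $\mathcal{M}_1,\mathcal{M}_2,\mathcal{M}_3$ using $e^{\pm K}$. Your observation that part~(3) is forced by the reality relations $\shhb=\overline{\shh}$, $\chhb=\overline{\chh}$ (equivalently by self-adjointness of $\I(\mathcal{M})$) is exactly the intended reading.
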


\begin{corollary} \label{maincor}
If $\phi$ and $k$ satisfy  \eqref{H-F-phi} and \eqref{newnlsshort} of
theorem \eqref{main}, then the coefficients of $a_x a_y$ and
$a^*_x a^*_y$ drop out and $L_Q$ becomes
\begin{align*}
L_Q=&
H_0 +
\int v(x-y)
\overline \phi (t, y)  \phi(t, x)  a^*_x a_y \ dx\, dy
+  \int \left(v*|\phi^2|\right)(x) a_x^* a_x\ dx\\
-& \int d(t, x, y) \frac{ a_x a^*_y + a^*_y a_x }{2}\ dx\, dy~,
\end{align*}
where $d$ is given by \eqref{trace} and the full operator reads
\begin{align*}
L=&
H_0 +
\int v(x-y)
\overline \phi (y)  \phi(t, x)  a^*_x a_y  dx dy
+  \int \left(v*|\phi^2|\right)(x) a_x^* a_x dx\\
-& \int d(t, x, y)  a^*_y a_x dx +
 N^{-1/2} e^B [A, V] e^{-B} + N^{-1} e^B V e^{-B} - N \chi_0 - \chi_1\\
 &:=\widetilde{L} - N \chi_0 - \chi_1~,
\end{align*}
and
\begin{align*}
 \chi_0=\frac{1}{2} \int v(x-y) |\phi(t, x)|^2 |\phi(t, y)|^2 dx \, dy~,
 \end{align*}
\begin{align*}
\chi_1(t)= -\frac{1}{2}\int d(t, x, x) dx~.
\end{align*}
\end{corollary}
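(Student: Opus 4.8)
The plan is to deduce the corollary from the theorem that immediately precedes it, so that the only real content is one short symplectic computation. By that theorem,
\[
L_Q = H_0 + \int v(x-y)\overline\phi(y)\phi(x)\,a^*_x a_y\,dx\,dy + \int \bigl(v*|\phi|^2\bigr)(x)\,a^*_x a_x\,dx + \I(\mathcal M),
\]
and the coefficients of $a_x a_y$, of $a^*_x a^*_y$, and of $-\frac{1}{2}(a_x a^*_y + a^*_y a_x)$ in $\I(\mathcal M)$ are listed there. The first step is to check that hypothesis~\eqref{newnlsshort} is exactly the statement that the $a_x a_y$ coefficient $-\mathcal M_{12}$ vanishes; by part~(3) of the theorem the $a^*_x a^*_y$ coefficient then vanishes automatically, being minus the conjugate of the first, and all that remains of $\I(\mathcal M)$ is the diagonal term $-\int d(t,x,y)\,\frac{1}{2}\bigl(a_x a^*_y + a^*_y a_x\bigr)\,dx\,dy$ with $d$ as in~\eqref{trace}.

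For the first step, write $u=\sh$, $1+p=\ch$ and set the theorem's expression for $-\mathcal M_{12}$ to zero; after regrouping it reads
\[
\bigl(i u_t + u g^T + g u - (1+p)m\bigr)\chb = \bigl(i p_t + [g,p] + u\overline m\bigr)u.
\]
Right-multiplying by $\chb^{-1}$ — legitimate since $\ch$, and hence $\chb$, is the identity plus a composition-convergent series in $k$ when $\|k(t,\cdot,\cdot)\|_{L^2(dx\,dy)}$ is small, as in section~\ref{sec:algebra} — converts this into~\eqref{newnlsshort} precisely when $u\,\chb^{-1} = (1+p)^{-1}u$, i.e. when $\ch\,\sh = \sh\,\chb$. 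That last identity is the off-diagonal block of $e^K e^{-K}=I$ (equivalently, symplecticity of $e^K$), so the equivalence holds. One also notes that $k(0,\cdot,\cdot)=0$ gives $\ch(0)=I$, $\sh(0)=0$, consistent with $B(0)=0$.

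It remains to assemble $L$. The surviving terms give the stated formula for $L_Q$. By the Hartree computation of section~\ref{sec:hartree}, $L = L_Q + N^{-1/2}e^B[A,V]e^{-B} + N^{-1}e^B V e^{-B} - N\chi_0$ with $\chi_0 = \frac{1}{2}\int v(x-y)|\phi(t,x)|^2|\phi(t,y)|^2\,dx\,dy$. The only remaining task is to put the symmetrized diagonal term into normal order via $[a_x,a^*_y]=\delta(x-y)$, which replaces $-\int d\,\frac{1}{2}(a_x a^*_y + a^*_y a_x)$ by the operator $-\int d(t,x,y)\,a^*_y a_x\,dx\,dy$ plus a c-number; collecting that c-number into $\chi_1(t) = -\frac{1}{2}\int d(t,x,x)\,dx$ and the operator into $\widetilde L$ yields $L = \widetilde L - N\chi_0 - \chi_1$ as claimed.

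The computation is routine bookkeeping; the points that actually need care are the infinite-dimensional manipulations — that $\I$ intertwines Lie brackets and conjugation (Theorem~\ref{algebra}), that the series defining $\ch$, $\sh$ and the inverse $\chb^{-1}$ converge, and that the ``missing'' $a a^*$ diagonal pieces of $H_0$ and of $\frac{1}{2}[A,[A,V]]$ are absorbed as in the preceding theorem — but all of these are already in place when the corollary is stated, so here they are simply invoked.
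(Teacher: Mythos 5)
Your proof is correct, and it follows the paper's intended route: the corollary is deduced from the unlabeled theorem immediately preceding it, with the only real content being the identification of \eqref{newnlsshort} with the vanishing of the $a_x a_y$ coefficient $-\mathcal{M}_{12}$ plus the normal-ordering step that produces $\chi_1$. The paper itself does not spell out the first of these, and you supply the missing detail carefully: the theorem's expression for $-\mathcal{M}_{12}=0$ reads $\bigl(iu_t+ug^T+gu-(1+p)m\bigr)\chb=\bigl(ip_t+[g,p]+u\overline m\bigr)u$, and passing to \eqref{newnlsshort} requires right-multiplying by $\chb^{-1}$ together with the identity $\chh\,\shh=\shh\,\chhb$. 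You correctly identify this identity as the $(1,2)$ block of $e^Ke^{-K}=I$, and it is indeed what makes $u\,\chb^{-1}=(1+p)^{-1}u$. That is precisely the algebraic point that has to be checked, and the rest — part~(3) of the theorem giving the conjugate coefficient, the normal ordering $-\int d\,\tfrac12(a_xa_y^*+a_y^*a_x)=-\int d\,a_y^*a_x-\tfrac12\int d(t,x,x)\,dx$, and assembling $L$ via the Hartree computation of section~\ref{sec:hartree} — is routine bookkeeping, as you say.

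One small point worth flagging, though it is the paper's issue and not yours: the normal-ordering computation yields a c-number $-\tfrac12\int d(t,x,x)\,dx$, so with the paper's convention $L=\widetilde L-N\chi_0-\chi_1$ one should have $\chi_1=+\tfrac12\int d(t,x,x)\,dx$, whereas the corollary as printed has the opposite sign. You have simply copied the paper's displayed definition; the sign of $\chi_1$ is immaterial for Theorem~\ref{main} (it enters only as a real phase), but if you are going to restate the identification explicitly you should either match the sign you computed or note the discrepancy.
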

\begin{remark}
Notice that
\begin{align*}
\widetilde{L} \Omega
= \left(
N^{-1/2} e^B [A, V] e^{-B} + N^{-1} e^B V e^{-B} \right) \Omega~,
\end{align*}
and therefore we can derive the bound
\begin{align*}
\|\widetilde{L} \Omega\|
\le N^{-1/2} \|e^B [A, V] e^{-B} \Omega\| + N^{-1} \|e^B V e^{-B} \Omega\|~.
\end{align*}
Also, $L$ is (formally) self-adjoint by construction.
The kernel $d(t, x, y)$, being the sum of the (1,1) entry of the self-adjoint matrices $
\left( \frac{1}{i} \frac{\partial}{\partial t} e^K \right) e^{-K}
$,
$[e^K, G]e^{-K} = e^K G e^{-K} -G$ and
the visibly self-adjoint term $ -\sh \overline m \ch -\ch m \shb $, is self-adjoint; thus, it has a real trace.
Hence, $\widetilde{L}$ is also self-adjoint.
\end{remark}

In the remainder of this paper, we check that the hypotheses of
our main theorem are satisfied, locally in time, for the potential $v(x) =
\chi(x) \frac{\epsilon}{|x|}$.

\section{Solutions to equation \eqref{newnlsshort}}
\label{sec:solns}

\begin{theorem} \label{localex}
Let $\epsilon_0$ be sufficiently small and assume that $v(x)= \frac{\epsilon_0}{|x|}$, or
$v(x)= \chi(x) \frac{\epsilon_0}{|x|}$ for $\chi \in C_0^{\infty}(\Bbb R^3)$ . Assume that $\phi$ is a smooth solution to
the Hartree equation \eqref{H-F}, $\|\phi\|_{L^2(dx)}=1$. Then there exists
$k \in L^{\infty}([0, 1]) L^2 (dx dy)$ solving \eqref{newnlsshort} with initial conditions $k(0, x, y)=0$  for
$0 \le t \le 1$.
The solution $k$ satisfies the following additional properties.
\begin{enumerate}
\item
\begin{align*}
\| \left( i
\frac{\partial}{\partial t} -\Delta_x -\Delta_y \right) k \|_{ L^{\infty}[0, 1] L^2(dx dy)} \le C~.
\end{align*}
\item
\begin{align*}
\| \left( i
\frac{\partial}{\partial t} -\Delta_x -\Delta_y \right) \sh \|_{ L^{\infty}[0, 1] L^2(dx dy)} \le C~.
\end{align*}
\item
\begin{align*}
\| \left( i
\frac{\partial}{\partial t} -\Delta_x +\Delta_y \right) p \|_{ L^{\infty}[0, 1] L^2(dx dy)} \le C~.
\end{align*}
\item The kernel $k$ agrees on $[0, 1]$ with a kernel $\widetilde{k}$ for which
\begin{align*}
 \|\widetilde{k}\|_{X^{\frac{1}{2}, \frac{1}{2}+}} \le C ~;
\end{align*}
see
\eqref{Xsd} for the definition of the space $ X^{s, \delta}$ and, of course, $\frac{1}{2}+$ denotes a fixed number slightly bigger than $\frac{1}{2}$.
\end{enumerate}
\end{theorem}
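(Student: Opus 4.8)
The plan is to solve \eqref{newnlsshort} with $k(0,\cdot,\cdot)=0$ by a contraction mapping argument in $L^\infty\big([0,1];L^2(dx\,dy)\big)$, the smallness of $\epsilon_0$ making both the forcing and the nonlinearity small, and then to upgrade the solution to the function spaces of (1)--(4). Since $\phi$ is a smooth solution of \eqref{H-F} with $\|\phi\|_{L^2(dx)}=1$, one has $\sup_{0\le t\le1}\|\phi(t,\cdot)\|_{H^N(dx)}\le C_N$ for every $N$, and the first point to record is that for $v(x)=\epsilon_0\chi(x)/|x|$ (or $v(x)=\epsilon_0/|x|$) every ingredient of \eqref{newnlsshort} built from $v$ and $\phi$ --- multiplication by $(v*|\phi|^2)(t,\cdot)$, the operators $h\mapsto\phi(t,x)\int v(x-z)\overline\phi(t,z)h(z,y)\,dz$ and its $y$--analogue, and the kernels $m=v_{12}\overline\phi_1\overline\phi_2$ and $\overline m$ --- is bounded on $L^2(dx\,dy)$, resp.\ lies in $L^2(dx\,dy)$, with norm $\le C\epsilon_0$ uniformly on $[0,1]$. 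This follows from Hardy's inequality for $m$ (giving $\iint|x-y|^{-2}|\phi(x)|^2|\phi(y)|^2\,dx\,dy\le4\|\nabla\phi\|_{L^2}^2$), from Young's or the Hardy--Littlewood--Sobolev inequality for the convolution operators, and from the smoothness of $v*|\phi|^2$ (elliptic regularity); meanwhile $e^{it(\Delta_x+\Delta_y)}$ and $e^{it(\Delta_x-\Delta_y)}$ are unitary on $L^2(\R^6)=L^2(dx\,dy)$.

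Next I would turn \eqref{newnlsshort} into a genuinely zeroth-order fixed-point equation for the single unknown $k$. Because $\|k(t,\cdot,\cdot)\|_{L^2(dx\,dy)}$ is kept small, the series \eqref{defs} converge in the Hilbert--Schmidt norm (using $\|AB\|_{HS}\le\|A\|_{HS}\|B\|_{HS}$), the map $k\mapsto u$ is analytically invertible near $0$, and $(1+p)^{-1}=\sum_{n\ge0}(-p)^n$ converges, so $u$ and $p$ are explicit convergent functions of $k$. Substituting these in and moving the free part of \eqref{newnlsshort} to the left yields an equation whose right-hand side, besides the forcing $m$, contains the factor $ip_t+[g,p]$, which at first sight carries two derivatives on $p$ (through $[-\Delta\delta,p]=-\Delta_xp+\Delta_yp$ and through $p_t$). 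The key observation is that, on differentiating $p=\tfrac12k\overline k+\cdots$ in $t$, substituting the $k$--equation itself for $k_t$, and integrating by parts in the internal composition variable, the two--derivative contributions cancel identically; this is exactly assertion (3), $(i\partial_t-\Delta_x+\Delta_y)p=\mathcal N_p$ with $\mathcal N_p$ bounded on $L^2(dx\,dy)$. After this elimination one is left with
\[
k(t)=-i\int_0^t e^{i(t-s)(\Delta_x+\Delta_y)}\big(m+\mathcal N[k]\big)(s)\,ds,
\]
where $\mathcal N[k]$ is bounded on $L^2(dx\,dy)$, carries an overall factor $\epsilon_0$, and is at least linear in $k$ --- its linear part being the operators of the first paragraph, and every remaining term at least quadratic in $k$, with the singular factors $m,\overline m$ there regularized by the convolution $v*(\cdot)$ next to them.

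The contraction is then routine. On the ball $\{\|k\|_{L^\infty([0,1];L^2(dx\,dy))}\le R\epsilon_0\}$, unitarity gives $\big\|\int_0^te^{i(t-s)(\Delta_x+\Delta_y)}F(s)\,ds\big\|_{L^\infty_tL^2}\le\int_0^1\|F(s)\|_{L^2}\,ds$; the forcing contributes $\le C\epsilon_0$, the linear part of $\mathcal N$ contributes $\le C\epsilon_0(R\epsilon_0)$, and the higher terms $\le C(R\epsilon_0)^2$, so for $R$ large and then $\epsilon_0$ small the map is a self-map of the ball and, by the same estimates on differences, a contraction. Banach's theorem yields $k\in L^\infty([0,1];L^2(dx\,dy))$, small, with $k(0)=0$, solving \eqref{newnlsshort}. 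Assertion (1) is read off $(i\partial_t-\Delta_x-\Delta_y)k=m+\mathcal N[k]\in L^\infty_tL^2$; (2) follows since $u=k+\tfrac1{3!}k\overline k k+\cdots$ and applying $i\partial_t-\Delta_x-\Delta_y$ term by term, using the $k$--equation and the same integration by parts, keeps everything in $L^\infty_tL^2$; (3) is the identity just discussed.

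For (4) I would cut off in time, extend to $t\in\R$, and re-run the fixed point in $X^{1/2,\,1/2+}$ for the $i\partial_t-\Delta_x-\Delta_y$ flow on $\R\times\R^6$, using the standard Duhamel bound $\big\|\int_0^te^{i(t-s)(\Delta_x+\Delta_y)}F\,ds\big\|_{X^{1/2,\,1/2+}}\lesssim\|F\|_{X^{1/2,\,-1/2+}}$ (with time cutoffs). This is where the real work lies. Although $m\notin H^{1/2}(\R^6)$ because of the $|x-y|^{-1}$ singularity, $m$ is smooth and rapidly decaying in $t$, hence its space-time Fourier transform is concentrated away from the paraboloid $\tau=|\xi|^2+|\eta|^2$; the weight $\langle\tau-|\xi|^2-|\eta|^2\rangle^{-1/2+}$ then supplies roughly one extra derivative, so that $m\in X^{1/2,\,-1/2+}$ reduces to the elementary $m\in H^{-1/2+}(\R^6)$. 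The linear part of $\mathcal N$ preserves $H^{1/2}$ in $(x,y)$ --- again by the two-derivative gain of $v*(\cdot)$ --- and is small by $\epsilon_0$, and the quadratic and higher terms are controlled by elementary $X^{s,b}$ product estimates on $\R^6$, their singular factors once more smoothed by the adjacent convolution. The part demanding genuine care is isolating the derivative-loss cancellation of the second paragraph and pushing the $X^{s,b}$ estimates through at exactly $s=\tfrac12$; everything else is a perturbation made small by $\epsilon_0$. Finally, since this $\widetilde k$ solves the same problem on $[0,1]$, uniqueness in $L^\infty_tL^2$ identifies it with $k$ there.
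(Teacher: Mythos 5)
Your strategy for parts (1)--(3) starts in the right place (fixed point, smallness from $\epsilon_0$, $S$ and $T$ acting term by term on the power series defining $u$ and $p$), but the reduction to a ``genuinely zeroth-order fixed-point equation for the single unknown $k$'' is not actually carried out and, as written, cannot be. Writing \eqref{newnlsshort} as the paper does,
\begin{align*}
Sk = m + S(k-u) + \epsilon(u) + \epsilon(p) + \bigl(Tp + \epsilon(p) + \epsilon(u)\bigr)(1+p)^{-1}u,
\end{align*}
the right-hand side involves $Sk$ itself, since both $S(k-u)$ and $Tp$ reduce (after the integration by parts you describe) to expressions built out of $Sk$, e.g.\ $Tp = \tfrac12\bigl((Sk)\overline k - k\overline{(Sk)}\bigr) + \cdots$. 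Substituting the $k$-equation for $k_t$ only reproduces the same right-hand side, including $Tp$, so the purported closed zeroth-order form of $\mathcal N[k]$ is never obtained --- you would still have to solve implicitly for $Tp$. The paper avoids this by iterating in the stronger norm $N(k)=\|k\|_{L^\infty_t L^2}+\|Sk\|_{L^\infty_t L^2}$, in which $\|S(u-k)\|=O(N(k)^3)$ and $\|Tp\|=O(N(k)^2)$ are genuine, non-circular estimates; assertions (1)--(3) are then read off directly. You should adopt that norm (or an equivalent) rather than trying to eliminate $Sk$ from the iteration.

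For part (4) your proposed route is fundamentally different and substantially harder than what the paper does, and you do not carry it out. You propose to redo the whole fixed point in $X^{1/2,\,1/2+}$, which would require bilinear (and higher) $X^{s,b}$ product estimates at the endpoint $s=\tfrac12$ on $\R\times\R^6$ together with a delicate treatment of the singular forcing $m$; you concede this is ``where the real work lies'' without supplying it. The paper instead observes that the argument of parts (1)--(3) applies equally to $(\partial_t)^N D^a k$ for every $N$ and every $0\le a<\tfrac12$, because $(\partial_t)^N D^a m\in L^\infty_t L^2$ in that range; after a smooth time cutoff $\widetilde k = \chi(t)k$, the elementary triangle inequality $\bigl|\tau-|\xi|^2\bigr|+|\tau|\ge|\xi|^2$ lets one trade the extra $\tau$-powers and the $S$-weight for spatial derivatives and the $X^{s,\delta}$ weight, yielding $\bigl\||\xi|^{3/2-}(|\tau-|\xi|^2|+1)^{1/2+}\widehat{\widetilde k}\bigr\|_{L^2(d\tau\,d\xi)}\le C$ directly --- no new fixed point, no $X^{s,b}$ bilinear machinery, and no endpoint issue at $s=\tfrac12$ (the obstruction there, which you correctly identify, is precisely why the paper goes around it). This observation is the key missing idea in your proposal.

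Smaller points: you invoke Hardy to bound $\|m\|_{L^2(dx\,dy)}$, which is fine, and the unitarity of the free flows is correct; but you should not describe the $Tp$ cancellation as requiring substitution of the $k$-equation for $k_t$ --- it is a purely algebraic identity from integration by parts in the composition variable, valid for any symmetric $k$, and using it as such is what makes the iteration well-defined.
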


\begin{proof}
We first establish some notation.
Let $S$ denote the Schr\"odinger operator
\begin{equation*}
S=
i \frac{\partial}{\partial t}
- \Delta_x - \Delta_y
\end{equation*}
and let $ T $ be the transport operator
\begin{equation*}
T=
i \frac{\partial}{\partial t}
- \Delta_x + \Delta_y~.
\end{equation*}
Let $\epsilon : L^2(dx dy) \to L^2(dx dy)$ denote schematically any linear operator of operator norm $\le C \epsilon_0$, where $C$ is a ``universal constant". In practice, $\epsilon$ will be (composition with) a kernel of the type
$\phi(t, x) \phi(t, y) v(x-y)$, or multiplication by $v * |\phi|^2$.
Also, recall the inhomogeneous term
\begin{align*}
m(t, x, y)=v(x-y) \overline \phi(t, x)\overline \phi(t, y)~.
\end{align*}
Then, equation \eqref{newnlsshort}, written explicitly, becomes
\begin{equation}
S k  =m + S(k-u) + \epsilon(u)+ \epsilon(p) +(T p + \epsilon(p) + \epsilon(u))(1+p)^{-1}u~. \label{eqp}
\end{equation}
Note that $\ch^2 - \sh \shb = 1$; thus, $1+p = \ch \ge 1$ as an operator and  $(1+p)^{-1}$ is bounded  from $L^2$ to
$L^2$.
We plan to iterate in the norm $N(k)=\|k\|_{L^{\infty}[0, 1]L^2(dx dy)}
+\|S k\|_{L^{\infty}[0, 1]L^2(dx dy)}$.
Notice that $\|m\|_{L^2(dx dy)} \le C \epsilon_0$.

Now solve
\begin{align*}
Sk_0 = m
\end{align*}
with initial conditions $k_0(0, \cdot, \cdot)=0$, where $N(k_0) \le C \epsilon_0$.
Define $u_0$, $p_0$ corresponding to $k_0$.

For the next iterate, solve
 \begin{equation*}
S k_1  = m + S(k_0-u_0) + \epsilon(u_0)+ \epsilon(p_0) +(T p_0 + \epsilon(p_0) + \epsilon(u_0))(1+p_0)^{-1}u_0~;
\end{equation*}
the non-linear terms satisfy
\begin{align*}
&\|S (u_0-k_0)\|_{ L^{\infty}[0, 1]L^2(dx dy)}=\\
&\| \frac{1}{3!}\left( (Sk_0)\overline k_0 k_0 - k_0 \overline{(Sk_0)} k_0
+ k_0\overline k_0 Sk_0 \right) + \cdots\|_{L^{\infty}[0, 1]L^2(dx dy)}\\
&= O(N(k_0)^3)~.
\end{align*}
Also, recalling that
$p_0 =\mbox{ch}(k_0) -1$, we have
\begin{align*}
\|T( p_0)\|_{L^{\infty}[0, 1]L^2(dx dy)}&=\| \frac{1}{2} \left((S k_0) \overline k_0 - k_0 \overline{ (S k_0)}\right) + \cdots \|_{L^{\infty}[0, 1]L^2(dx dy)}\\
&= O\left(N(k_0)^2\right)~.
\end{align*}
Thus, $N(k_1) \le C \epsilon_0 + C \epsilon_0^2$.
Continuing this way, we obtain a fixed point solution in this space which satisfies
the first three requirements of theorem \ref{localex}.

In fact, we can apply the same argument to $\left(\frac{\partial}{\partial t}\right)^N D^a k$, since $\left(\frac{\partial}{\partial t}\right)^N D^a m \in L^{\infty}[0, 1]L^2 (dx \, dy)$ for $0 \le a < \frac{1}{2}$.
However, we cannot repeat the argument for
$D^{1/2}k$.

We would like to have $\|S D^{1/2}k\|_{L^{\infty}[0, 1]L^2 (dx \, dy)}$  finite.
Unfortunately, this misses ``logarithmically" because of the singularity of $v$.

Fortunately, we can use the well-known $X^{s, \delta}$ spaces (see \cite{B,KPV,K-MM})
to show that $\||S|^s D^{1/2}u\|_{L^2(dt)L^2 (dx \, dy)}$
is finite locally in time for (all) $1>s> \frac{1}{2}$. This assertion will be sufficient for our purposes.
Recall the definition of $X^{s, \delta}$:
\begin{align}
\|
|\xi|^{s}\big(|\tau -|\xi|^2| +1\big)^{\delta}\widehat{ u}
\|_{L^{2}(d\tau d\xi)} :=\Vert u\Vert_{X^{s ,\delta}}~. \label{Xsd}
\end{align}
Going back to \eqref{eqp}, we write
\begin{align*}
S(k)=m + F
\end{align*}
where
we define the expression
\begin{align*}
F(k):=S(k-u)-\epsilon(u)+pm +\left(T(p)+\epsilon(p)+u\overline{m}\right)(1+p)^{-1}u\ .
\end{align*}
The idea is to
localize in time on the right-hand side:
\begin{align*}
S(\widetilde{k})=\chi(t) \left(m + F\right)~,
\end{align*}
where $\chi \in C_0^{\infty}(\Bbb R)$, $\chi=1$ on $[0, 1]$.
Then, $\widetilde{k}=k$ on $[0, 1]$.

As we already pointed out, we can  estimate
$\|S \left(\frac{\partial}{\partial t}\right)^N D^a k\|_{L^2 [0, 1]L^2 (dx \, dy)} \le C$ for $0 \le a < \frac{1}{2}$.
We can further localize $\widetilde{k}$ in time to insure that these relations hold globally
in time.
By using the triangle inequality $|\tau - |\xi|^2| + |\tau| \ge |\xi|^2$,
we immediately conclude that
\begin{align*}
\|
|\xi|^{\frac{3}{2}-}\big(|\tau -|\xi|^2| +1\big)^{\frac{1}{2}+}\widehat{ k_{\chi}}
\|_{L^{2}(d\tau d\xi)} \le C~.
\end{align*}

\end{proof}
$\square$

\section{Error term $e^B V e^{-B}$\label{errors1}}
\label{sec:error-I}

The goal of this section is to list explicitly
all terms in $e^B V e^{-B}$  and to find conditions under which these terms are bounded.
Recall that $V$ is defined by $V=
\int v(x_0 - y_0) Q^*_{x_0y_0}Q_{x_0y_0}\ d x_0\, d y_0$.
For simplicity,  $\shbb$ denotes either $\sh$ or $\shb$, and $\chbb$ denotes either $\ch$ or $\chb$.

Let $x_0 \neq y_0$; we obtain
\begin{align*}
e^B Q^*_{x_0y_0} Q_{x_0y_0} e^{-B}=e^B Q^*_{x_0y_0} e^{-B} e^B Q_{x_0y_0} e^{-B}~.
\end{align*}
According to the isomorphism \eqref{metaplectic}, we have
\begin{align*}
Q^*_{x_0 y_0} = \I
\left(
\begin{matrix}
0& \,0\\
-2 \delta (x- x_0)\delta (y- y_0)&\,0
\end{matrix}
\right)
\end{align*}
where the operator
\begin{align*}
&e^B Q^*_{x_0y_0} e^{-B}\\
&=\I \left(
\left(
\begin{matrix}
\ch & \sh\\
\shb& \chb
\end{matrix}
\right)
\left(
\begin{matrix}
0& \,0\\
-2 \delta (x- x_0)\delta (y- y_0)&\,0
\end{matrix}
\right)
\left(
\begin{matrix}
\ch & -\sh\\
-\shb& \chb
\end{matrix}
\right) \right)
\end{align*}
is a linear combination of the terms
\begin{align}
&\int \chbb (x, x_0) \chbb (y_0, y)\label{Q^*_{xy}}
 Q^*_{x y} dx \, dy~,\\
&\int \shbb (x, x_0) \chbb (y_0, y)\notag
 N_{x y} dx \, dy~,  \\
&\int \shbb (x, x_0) \shbb (y_0, y)\notag
 Q_{x y} dx \, dy~.\\ \notag
\end{align}

A similar calculation shows that $e^B Q_{x_0y_0} e^{-B}$
is a linear combination of
\begin{align}
&\int \chbb (x, x_0) \chbb (y_0, y) \label{Q_{xy}}
 Q_{x y} dx \, dy~,\\
&\int \shbb (x, x_0) \chbb (y_0, y)\notag
 N_{x y} dx \, dy~,\\
&\int \shbb (x, x_0) \shbb (y_0, y)\notag
 Q^*_{x y} dx \, dy~.\\ \notag
\end{align}

 Thus, $e^B Q^*_{x_0y_0}Q_{x_0y_0}e^{-B}$ is a linear combination of the
 nine possible terms obtained by combining the above.

   Now we list all terms in $e^B V e^{-B} \Omega$.
   Terms in  $e^B V e^{-B} $ ending in $Q_{xy}$ are automatically discarded because they contribute nothing when applied to $\Omega$.
 The remaining six terms are listed below.
 \begin{align}
\int &\chbb (x_1, x_0) \chbb (y_0, y_1) \notag
\shbb (x_2, x_0) \chbb (y_0, y_2)\\
& v(x_0 - y_0) Q^*_{x_1 y_1} N_{x_2 y_2} \Omega
 dx_1 \, dy_1 \, dx_2 \, dy_2\, dx_0 \,dy_0~, \label{one}
 \end{align}
 \begin{align}
 \int &\chbb (x_1, x_0) \chbb (y_0, y_1) \notag
\shbb (x_2, x_0) \shbb (y_0, y_2)\\
& v(x_0 - y_0) Q^*_{x_1 y_1} Q^*_{x_2 y_2} \Omega
 dx_1 \, dy_1 \, dx_2 \, dy_2\, dx_0 \,dy_0~, \label{two}
 \end{align}
  \begin{align}
\int &\shbb (x_1, x_0) \chbb (y_0, y_1) \notag
\shbb (x_2, x_0) \chbb (y_0, y_2)\\
& v(x_0 - y_0) N_{x_1 y_1} N_{x_2 y_2} \Omega
 dx_1 \, dy_1 \, dx_2 \, dy_2\, dx_0 \, dy_0~, \label{three}
 \end{align}
 \begin{align}
 \int &\shbb (x_1, x_0) \chbb (y_0, y_1) \label{four}
\shbb (x_2, x_0) \shbb (y_0, y_2)\\
& v(x_0 - y_0) N_{x_1 y_1} Q^*_{x_2 y_2} \Omega \notag
 dx_1 \, dy_1 \, dx_2 \, dy_2 \, dx_0 \,dy_0~,
 \end{align}
 \begin{align}
\int &\shbb (x_1, x_0) \shbb (y_0, y_1) \notag
\shbb (x_2, x_0) \chbb (y_0, y_2)\\
& v(x_0 - y_0) Q_{x_1 y_1} N_{x_2 y_2} \Omega\label{five}
 dx_1 \, dy_1 \, dx_2 \, dy_2 \, dx_0 \,dy_0~,
 \end{align}
 \begin{align}
 \int &\shbb (x_1, x_0) \shbb (y_0, y_1) \notag
\shbb (x_2, x_0) \shbb (y_0, y_2)\\
& v(x_0 - y_0) Q_{x_1 y_1} Q^*_{x_2 y_2} \Omega\label{six}
 dx_1 \, dy_1 \, dx_2 \, dy_2 \, dx_0 \,dy_0~.
 \end{align}
To compute the above six terms,
recall \eqref{QQ^*} through
 \eqref{NN} as well as \eqref{canonical}. In general,
  $N_{xy} \Omega = 1/2 \delta(x-y) \Omega$, while
  $\int f(x, y)Q^*_{xy} dx dy \Omega= (0, 0, f(x, y), 0 , \cdots)$
  up to symmetrization and normalization.

  The resulting contributions (neglecting symmetrization and normalization)
  follow.

  From \eqref{one}:
  \begin{align}
  &\psi(x_1, y_1)= \label{onee}\\
 &
\int \chbb (x_1, x_0) \chbb (y_0, y_1)\notag
\shbb (x_2, x_0) \chbb (y_0, x_2) v(x_0- y_0)\\
& \qquad \times dx_2 \,  dx_0 \, dy_0~.\notag
 \end{align}
 From \eqref{two}:
   \begin{align}
  &\psi(x_1, y_1, x_2, y_2)= \label{twoo}\\
 &
\int \chbb (x_1, x_0) \chbb (y_0, y_1)\notag
\shbb (x_2, x_0) \shbb (y_0, y_2) v(x_0- y_0)\\
  &\qquad \times dx_0 \, dy_0~.\notag
 \end{align}
 From \eqref{three}:
   \begin{align}
  &\psi= \label{threee}\\
 &
\int \shbb (x_1, x_0) \chbb (y_0, x_1)\notag
\shbb (x_2, x_0) \chbb (y_0, x_2) v(x_0- y_0)\\
&\qquad \times dx_1 \, dx_2 \,\notag
dx_0 \, dy_0~.
 \end{align}

 From \eqref{four}, with the $N$ and $Q^*$ reversed, we get

  \begin{align}
  &\psi(x_2, y_2)=\label{four1}\\
 &
\int \shbb (x_1, x_0) \chbb (y_0, x_1)\notag
\shbb (x_2, x_0) \shbb (y_0, y_2) v(x_0- y_0)\\
&\qquad \times  dx_1 \,  dx_0 \, dy_0~,\notag
 \end{align}
as well as the contribution from $[N, Q^*]$, i.e.
\begin{align}
  &\psi(y_1, y_2)= \label{four2}\\
 &
\int \shbb (x_1, x_0) \chbb (y_0, y_1)\notag
\shbb (x_1, x_0) \shbb (y_0, y_2) v(x_0- y_0)\\
& \qquad \times dx_1 \,  dx_0 \, dy_0~.\notag
 \end{align}
The contribution of \eqref{five} is
zero, and, finally, the contribution of \eqref{six}, using \eqref{QQ^*},
consists of four numbers, which can be represented by the two formulas
\begin{align}
  \psi&= \int \shbb (x_1, x_0) \shbb (y_0, x_1)
\shbb (x_2, x_0) \shbb (y_0, x_2) v(x_0- y_0)\label{six1}\\
&\qquad \times dx_1 \,  dx_2 \,dx_0 \, dy_0\notag
 \end{align}
 and
 \begin{equation}\label{six2}
  \psi=\int |\shbb |^2(x_1, x_0)
 |\shbb|^2 (y_0, y_1) v(x_0- y_0)
dx_1 \,  dy_1 \,dx_0 \, dy_0~.
 \end{equation}

We can now state the following proposition.

\begin{proposition} \label{vprop}
The state $e^B V e^{-B} \Omega $ has entries on the zeroth, second and fourth slot of a Fock space vector of the form given above.
In addition, if
\begin{align*}
\| \left( i
\frac{\partial}{\partial t} -\Delta_x -\Delta_y \right) \sh \|_{ L^1[0, T] L^2(dx dy)} \le C_1,
\end{align*}
\begin{align*}
\| \left( i
\frac{\partial}{\partial t} -\Delta_x +\Delta_y \right) p \|_{ L^1[0, T] L^2(dx dy)} \le C_2
\end{align*}
and
$v(x)=\frac{1}{|x|}$, or $v(x) = \chi(x)\frac{1}{|x|}$, then
\begin{align*}
\int _0^T\|e^B V e^{-B} \Omega\|^2_{\F} \, \, dt\le C~,
\end{align*}
where $C$ only depends on $C_1$ and $C_2$.
\end{proposition}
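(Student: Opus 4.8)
The starting point is the explicit list just obtained: $e^{B}Ve^{-B}\Omega$ has nonzero components only on the zeroth, second and fourth slots of the Fock vector, and they are, up to symmetrization and combinatorial constants, the scalars \eqref{threee}, \eqref{six1}, \eqref{six2}, the two–variable kernels \eqref{onee}, \eqref{four1}, \eqref{four2}, and the four–variable kernel \eqref{twoo}. Since $\|\cdot\|_{\F}^{2}$ is the sum of the squared $L^{2}$ norms of the slots, it suffices to control the moduli of the three scalars, the $L^{2}(\R^{6})$ norm of the (symmetrized) second–slot kernel and the $L^{2}(\R^{12})$ norm of the (symmetrized) fourth–slot kernel, each in $L^{2}_{t}[0,T]$. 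The first reduction is purely algebraic: writing $\ch=I+p$ and $\chb=I+\overline{p}$, split every factor $\chbb$ into its identity part and a $p$–part; because $p(t)$ acts on $L^{2}$ with $\|p(t)\|_{\mathrm{op}}\le\|p(t)\|_{L^{2}(dx\,dy)}$, each composition of kernels in \eqref{onee}–\eqref{twoo} is a bounded operator applied to an $L^{2}$ kernel, so all six expressions become finite combinations of terms built only from $\sh,\overline{\sh},p,\overline{p}$ and $v$, with $v(x_{0}-y_{0})$ always contracted in the two ``potential'' variables with neighbouring kernels.

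The analytic content is then twofold. (i) Propagation: since $\sh(0)=p(0)=0$, Duhamel's formula expresses $\sh(t)$ in terms of $e^{-i(t-s)(\Delta_{x}+\Delta_{y})}$ applied to $(i\partial_{s}-\Delta_{x}-\Delta_{y})\sh(s)$, and $p(t)$ in terms of the transport propagator $e^{-i(t-s)(\Delta_{x}-\Delta_{y})}$ applied to $(i\partial_{s}-\Delta_{x}+\Delta_{y})p(s)$; combining the two integrability hypotheses with unitarity of these groups and with standard dispersive estimates (Strichartz estimates $\sh\in L^{q}_{t}L^{r}_{xy}(\R^{6})$ for Schr\"odinger–admissible pairs, and, in the $x-y$ variable where the cutoff $\chi$ is supported, Kato local smoothing) — supplemented at the borderline by the $X^{s,\delta}$ bounds of Theorem \ref{localex} — one obtains the space–time control needed below: $\sh\in L^{\infty}_{t}L^{2}\cap L^{2}_{t}(\text{Strichartz})$, a bound on the ``density'' $\rho(t,x)=\int|\sh(t,x,y)|^{2}\,dy$ in $L^{2}_{t}$ of an appropriate $L^{q}_{x}$, and the analogous estimates for $p$. (ii) Potential: for $v(x)=\chi(x)/|x|$ (or $1/|x|$) one uses that, up to constants, $1/|x|$ and $1/|x|^{2}$ are the kernels of $(-\Delta)^{-1}$ and $|D|^{-1}$ on $\R^{3}$. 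Thus the dangerous integrals — those of the form $\int\rho(x)\rho'(y)|x-y|^{-1}\,dx\,dy$ arising from the scalar slots, and the bilinear ones $\int|f(x)|^{2}|g(y)|^{2}|x-y|^{-2}\,dx\,dy$ arising after squaring the second– and fourth–slot kernels — are rewritten by duality as $\|\rho\|_{\dot H^{-1}}^{2}$ and $\||f|^{2}\|_{\dot H^{-1/2}}\||g|^{2}\|_{\dot H^{-1/2}}$ and then, by Hardy–Littlewood–Sobolev / dual Sobolev, estimated by $\|\rho\|_{L^{6/5}}^{2}$ and $\|f\|_{L^{3}}^{2}\|g\|_{L^{3}}^{2}\lesssim\|f\|_{H^{1/2}}^{2}\|g\|_{H^{1/2}}^{2}$, i.e.\ by norms of $\sh$ (and $p$) already controlled in (i). The splitting of one derivative as $\tfrac12+\tfrac12$ between the two factors is precisely what keeps the Coulomb singularity within reach of the regularity — essentially only $H^{s}$ for $s<1/2$ — available for $\sh$.

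Collecting, each of the six contributions is bounded by a product of norms from (i), each in $L^{2}_{t}[0,T]$ (or an $L^{\infty}_{t}$ factor paired with an $L^{2}_{t}$ one); Cauchy–Schwarz in time then gives $\int_{0}^{T}\|e^{B}Ve^{-B}\Omega\|_{\F}^{2}\,dt\le C(C_{1},C_{2})$, the compact support of $v^{2}$ (from $\chi$) and the finite time interval absorbing the lower–order, $p$–dependent terms and all constants. The step I expect to be the genuine obstacle is the potential estimate: after squaring, the second– and fourth–slot kernels produce a $|x-y|^{-2}$ singularity that sits exactly at the threshold of the regularity of $\sh$ — the same ``logarithmic miss'' met in the proof of Theorem \ref{localex} — so one must place each $\sh$ in the sharp space, use the bilinear structure rather than a crude pointwise Hardy inequality, and, where individual contributions among \eqref{onee}, \eqref{four1}, \eqref{four2} would diverge, combine them so that the worst parts cancel before estimating.
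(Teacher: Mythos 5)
Your organizational reduction agrees with the paper: the nonzero slots are $0,2,4$, and the first step is to split each $\chbb = \delta + p$ and use $\sup_t\|p\|_{L^2(dx\,dy)}$ plus Cauchy--Schwarz in the contracted variables to reduce to expressions built from $\sh$, $\shb$, $p$, $\overline p$ and one factor of $v(x_0-y_0)$. Up to that point you are on the paper's track.

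The gap is in your step (ii). You propose to absorb the Coulomb singularity by duality/HLS/Sobolev, landing on estimates such as
\begin{align*}
\int \frac{|f(x)|^2|g(y)|^2}{|x-y|^2}\,dx\,dy \lesssim \|f\|_{H^{1/2}}^2\,\|g\|_{H^{1/2}}^2~,
\end{align*}
and you then say these are ``norms of $\sh$ already controlled in (i).'' They are not. The hypotheses of the Proposition supply only $\|S\,\sh\|_{L^1_tL^2}$ and $\|T\,p\|_{L^1_tL^2}$ with zero initial data, which by Duhamel and unitarity give $\sh,p\in L^\infty_tL^2(dx\,dy)$ and Strichartz $L^q_tL^r$ bounds at the $L^2$ level of regularity --- but no $D^{1/2}$ control. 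The paper is explicit in Section~\ref{sec:solns} that the $D^{1/2}$ bound ``misses logarithmically,'' and only a time-averaged $X^{1/2,1/2+}$ bound is available (and that only via the truncated potential, and used later for a different purpose). So your concluding inequality invokes regularity you do not have, and the proposed closure fails precisely at the threshold you yourself flag as suspicious. Your suggested remedy --- cancellations between \eqref{onee}, \eqref{four1}, \eqref{four2}, or sharp bilinear refinements --- is not what the paper does and is not needed.

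What the paper does instead is to treat the Coulomb factor by the generalized Sj\"olin local smoothing estimate, Lemma~\ref{sjl}: if $f(t,x_1,\dots,x_n)$ satisfies $\|(i\partial_t-\Delta_{x_1}-\Delta_{x_2}\pm\Delta_{x_3}\cdots)f\|_{L^1_tL^2}\le C$ with zero initial data, then $\|\,f/|x_1-x_2|\,\|_{L^2_tL^2}\le C$. This is a \emph{full} $|x_1-x_2|^{-1}$ gain, obtained by time-averaging along the dispersive flow, requiring only $L^1_tL^2$ control of $Sf$ and no spatial Sobolev regularity. Crucially, it is applied to the \emph{composite} kernel $f=\shbb\cdot\shbb$ (not to each $\sh$ factor separately), after choosing $\sh$ vs.\ $\shb$ so that the Laplacians in the two contracted variables $x_0,y_0$ have the same sign --- this is exactly what makes Lemma~\ref{sjl} applicable and why the absolute value in the Cauchy--Schwarz split is essential. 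Each of the six contributions is then bounded individually, term by term, with the first Cauchy--Schwarz factor estimated in $L^\infty_t$ by energy and the second by Lemma~\ref{sjl} in $L^2_t$. Replacing your HLS/Sobolev step with this local smoothing argument is the missing idea.
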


\begin{proof}
This follows by writing $\ch = \delta(x-y) + p$
and applying Cauchy-Schwartz and local smoothing estimates as in
the work of
Sj\"olin \cite{sj},  Vega \cite{vega}; see also  Constantin and Saut \cite{Const-Saut}.
In fact, we need the following slight generalization (see
Lemma \ref{sjl} below):
If
\begin{align*}
\| \left( i
\frac{\partial}{\partial t} -\Delta_{x_1} -\Delta_{x_2}
 \pm\Delta_{x_3} \cdots \pm\Delta_{x_n}
 \right) f(t, x_1, \cdots x_n) \|_{ L^1[0, T] L^2(dt dx)} \le C~,
\end{align*}
with initial conditions $0$,
then
\begin{align}
\|\frac{f(t, x_1, x_2, \cdots)}{|x_1-x_2|}\|_{ L^2[0, T] L^2(dx dy)}
\le C~. \label{sj}
\end{align}
We will check a typical term, \eqref{twoo}. This amounts to
proving the following three terms are in $L^2$.
\begin{enumerate}
\item
  \begin{align*}
  &\psi_{pp}(t, x_1, y_1, x_2, y_2)= \\
 &
\int p (t, x_1, x_0) p (t, y_0, y_1)\notag
\shbb (t, x_2, x_0) \shbb (t, y_0, y_2) v(x_0- y_0)\ dx_0 \, dy_0~.\notag
 \end{align*}
We use Cauchy-Schwartz in $x_0, y_0$ to get
\begin{align*}
&\int_0^T \int |\psi_{pp}|^2 dt\, dx_1\, dx_2\, d y_1\, dy_2\\
&\le \sup_t \int| p (t, x_1, x_0) p (t, y_0, y_1)|^2
dx_1\, dx_0\, d y_1\, dy_0\\
& \times \int_0^T \int
|\shbb (t, x_2, x_0) \shbb (t, y_0, y_2) v(x_0- y_0)|^2 dt\,
dx_2\, dx_0\, d y_2\, dy_0 \le C~.
\end{align*}
The first term is estimated by energy, and the second one is
an application of \eqref{sj} with $f = \shbb \shbb$. Notice that,
 because of the absolute value, we can
choose either $\sh$ or $\shb$ to insure that the Laplacians in $x_0$,
$y_0$ have the same signs.

\item
  \begin{align*}
  &\psi_{p\delta}(t, x_1, y_1, x_2, y_2)= \\
 &
\int p (t, x_1, x_0) \notag
\shbb (t, x_2, x_0) \shbb (t, y_1, y_2) v(x_0- y_1)\ dx_0~. \notag
 \end{align*}
Here, we use Cauchy-Schwartz in $x_0$ to estimate, in a similar fashion,
\begin{align*}
&\int_0^T \int |\psi_{p\delta}|^2 dt\, dx_1\, dx_2\, d y_1\, dy_2\\
&\le \sup_t \int| p (t, x_1, x_0) |^2
dx_1\, dx_0\\
& \times \int_0^T \int
|\shbb (t, x_2, x_0) \shbb (t, y_1, y_2) v(x_0- y_1)|^2 dt\,
dx_2\, dx_0\, d y_2\, dy_0 \le C~.
\end{align*}

\item
 \begin{align*}
  \psi_{\delta \delta}(x_1, y_1, x_2, y_2)=
\shbb (t, x_2, x_1) \shbb (t, y_1, y_2) v(x_1- y_1)~,
 \end{align*}which is just a direct application of \eqref{sj}.
\end{enumerate}
All other terms are similar.
\end{proof}
$\square$

We have to sketch the proof of the local smoothing estimate
that we used above.

\begin{lemma} \label{sjl}
If $f: \mathbb{R}^{3n+1} \to \mathbb{C}$ satisfies
\begin{align*}
\| \left( i
\frac{\partial}{\partial t} -\Delta_{x_1} -\Delta_{x_2}
 \pm\Delta_{x_3} \cdots \pm\Delta_{x_n}
 \right) f(t, x_1, \cdots x_n) \|_{ L^1[0, T] L^2(dx dy)} \le C
\end{align*}
with initial conditions $f(0, \cdots)=0$,
then
\begin{align*}
\|\frac{f(t, x_1, x_2, \cdots)}{|x_1-x_2|}\|_{ L^2[0, T] L^2( dx)}
\le C~.
\end{align*}
\end{lemma}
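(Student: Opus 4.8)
The plan is to reduce the statement to the classical local smoothing estimate for the free Schr\"odinger group (Sj\"olin, Vega, Constantin--Saut) by freezing all variables except $t,x_1,x_2$ and treating the remaining variables $x_3,\dots,x_n$ as parameters. First I would write $f = f(t,x_1,\dots,x_n)$ as the Duhamel solution of $\left(i\partial_t - \Delta_{x_1} - \Delta_{x_2} \pm \Delta_{x_3}\cdots\pm\Delta_{x_n}\right)f = F$, $f(0,\cdot)=0$, so that
\[
f(t,\cdot) = -i\int_0^t e^{i(t-s)\left(\Delta_{x_1}+\Delta_{x_2}\pm\Delta_{x_3}\cdots\pm\Delta_{x_n}\right)} F(s,\cdot)\, ds .
\]
Since the propagator factors as a product over the variable groups, and the propagators in $x_3,\dots,x_n$ are unitary on $L^2(dx_3\cdots dx_n)$ and commute with multiplication by $|x_1-x_2|^{-1}$, I would take the $L^2(dx_3\cdots dx_n)$ norm first; by Minkowski's inequality in the $s$-integral this bounds $\|f(t,\cdot)/|x_1-x_2|\|$ by $\int_0^t$ of the $L^2_{x_3\cdots x_n}$ norm of $e^{i(t-s)(\Delta_{x_1}+\Delta_{x_2})}F(s)$ divided by $|x_1-x_2|$.

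Next, with $x_3,\dots,x_n$ frozen, the operator $e^{it(\Delta_{x_1}+\Delta_{x_2})}$ is just the free Schr\"odinger group in the $6$ spatial dimensions $(x_1,x_2)\in\mathbb R^6$, and the weight $|x_1-x_2|^{-1}$ is $|w|^{-1}$ in the difference variable $w=x_1-x_2$. After the orthogonal change of variables $w=(x_1-x_2)/\sqrt2$, $z=(x_1+x_2)/\sqrt2$ (which diagonalizes $\Delta_{x_1}+\Delta_{x_2}=\Delta_w+\Delta_z$), the estimate
\[
\Big\| \frac{1}{|w|}\, e^{it\Delta_{w,z}} h \Big\|_{L^2([0,T],\,L^2(dw\,dz))} \le C\, \|h\|_{L^2(dw\,dz)}
\]
is exactly the standard local smoothing (Kato smoothing) estimate in dimension $6$: the weight $|w|^{-1}$ has the critical homogeneity $-1 < -1/2$ needed in $\mathbb R^6$ (indeed $|w|^{-1}\in L^2_{\mathrm{loc}}+L^\infty$ of the $3$-dimensional $w$-space, which is the borderline $|w|^{-\alpha}$ with $\alpha=1<3/2$), so it follows from the resolvent bound $\sup_{\lambda}\|\,|w|^{-1}(\Delta-\lambda\mp i0)^{-1}|w|^{-1}\|<\infty$ together with the $TT^*$/Kato--Yajima argument. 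I would cite \cite{sj,vega,Const-Saut} for this and note only that the presence of the extra free direction $z$ changes nothing since the estimate is a fixed-time, difference-variable estimate. Plugging this in with $h = F(s,\cdot)$ and integrating over $s$ gives
\[
\Big\|\frac{f}{|x_1-x_2|}\Big\|_{L^2([0,T],L^2(dx))} \le C\int_0^T \|F(s,\cdot)\|_{L^2(dx)}\, ds = C\,\|F\|_{L^1[0,T]L^2(dx)} \le C,
\]
which is the claim.

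The only real point requiring care — and the one I would flag as the main obstacle — is that the weighted estimate is \emph{not} dimension-dependent in a way that could fail here: one must be sure that a weight of homogeneity $-1$ is admissible in the $6$-dimensional $(x_1,x_2)$-space (it is, comfortably, since one needs homogeneity strictly less than $3$, the half-dimension being $3$), and that localizing in time to $[0,T]$ is what makes the non-endpoint estimate hold uniformly; globally in time the $|w|^{-1}$ weight would be too weak, which is why the hypothesis and conclusion are both stated on $[0,T]$. A secondary bookkeeping issue is justifying the interchange of the $L^2_{x_3\cdots x_n}$ norm with the Duhamel $s$-integral, which is immediate from Minkowski's integral inequality once $F\in L^1([0,T];L^2)$. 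Everything else is routine.
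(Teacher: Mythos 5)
Your overall strategy is the same as the paper's: reduce to the homogeneous problem by Duhamel, rotate to a difference variable $w=(x_1-x_2)/\sqrt{2}$, and invoke a Kato--Yajima/local smoothing estimate for the weight $|w|^{-1}$. The paper (Lemma~\ref{sjl}) does exactly this, but then proves the smoothing estimate by hand: Plancherel in the transverse variables $x=(x_2,x_3,\ldots)$ produces $\delta(\xi-\eta)$, the transverse time phases cancel, and what remains is the Schur-test bound $\sup_{\xi_1}\int \delta(|\xi_1|^2-|\eta_1|^2)|\xi_1-\eta_1|^{-1}\,d\eta_1\le C$ in the three-dimensional $\xi_1$-variable.

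Two of your justifications, however, are not right, and one of them conceals a genuine gap. First, the estimate you need is \emph{not} ``the standard Kato smoothing estimate in dimension $6$'': the weight $|w|^{-1}$ only involves the $3$-dimensional block $w$, hence it is \emph{larger} (more singular) than the genuine $\mathbb{R}^6$ weight $|(x_1,x_2)|^{-1}$, so the standard full-space $\mathbb{R}^6$ result does not imply it. The accounting ``one needs homogeneity strictly less than $3$, the half-dimension being $3$'' is also off; the relevant singularity threshold is the $3$-dimensional one $\alpha<3/2$, and $\alpha=1$ is the exactly critical Kato--Yajima weight (no derivative loss), not a comfortable margin. The missing idea is to Fourier transform in the transverse variables $(z,x_3,\ldots,x_n)$: the corresponding propagator becomes a $t$-dependent phase $e^{it\mu(\eta)}$ that disappears under the modulus, and for each frozen transverse frequency $\eta$ one applies the $3$-dimensional smoothing estimate for $e^{it\Delta_w}$ with weight $|w|^{-1}$, integrating in $\eta$ at the end. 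Your phrase ``the extra free direction $z$ changes nothing since the estimate is a fixed-time, difference-variable estimate'' does not supply this mechanism; as written, your reduction to a stand-alone $6$-dimensional inequality is unjustified, and similarly your earlier attempt to ``take the $L^2(dx_3\cdots dx_n)$ norm first'' does not commute with the $L^2(dt)$ norm in the way the argument needs. Second, the claim that ``globally in time the $|w|^{-1}$ weight would be too weak, which is why the hypothesis and conclusion are both stated on $[0,T]$'' is incorrect: the $3$-dimensional Kato--Yajima estimate $\| |x|^{-1} e^{it\Delta}f\|_{L^2(\mathbb{R}_t\times\mathbb{R}^3)}\le C\|f\|_{L^2}$ holds globally in time; the interval $[0,T]$ appears only because the hypothesis is posed there. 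Both issues are repairable, and once the transverse Fourier reduction is done properly your proof matches the paper's.
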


\begin{proof} We follow the general outline of Sjolin, \cite{sj}.
Using Duhamel's principle, it suffices to assume that
\begin{align}
\left( i
\frac{\partial}{\partial t} -\Delta_{x_1} -\Delta_{x_2}
 \pm\Delta_{x_3} \cdots \pm\Delta_{x_n}
 \right) f(t, x_1, \cdots x_n)=0 \label{eq}
\end{align}
with initial conditions $f(0, \cdots)= f_0 \in L^2$.
Furthermore, after the change of variables $ x_1 \to  \frac{x_1 + x_2}{\sqrt 2}$,
$x_2 \to \frac{ x_2 - x_1}{\sqrt 2}$, it suffices to prove
that
\begin{align*}
\|\frac{f(t, x_1, x_2, \cdots)}{|x_1|}\|_{ L^2[0, T] L^2( dx)}
\le C~,
\end{align*}
where $f$ satisfies  the same equation \eqref{eq}.
Changing notation, denote $x= (x_2, x_3, \cdots)$ and let $<\xi>^2$ be the relevant
expression $\pm |\xi_2|^2 \pm |\xi_3|^2 \ldots$. Write
\begin{align*}
f(t, x_1, x)= \int e^{i t (|\xi_1|^2 + <\xi>^2)} e^{i x_1 \cdot \xi_1 + i x \cdot \xi} \widehat{f}_0(\xi_1, \xi)\ d \xi_1\, d \xi~.
\end{align*}
Thus, we obtain
\begin{align*}
&\int \frac{|f(t, x_1, x)|^2}{|x_1|^2} dt dx_1 dx\\
&=\int
\int e^{i t (|\xi_1|^2- |\eta_1|^2 + <\xi>^2 - <\eta>^2)}
\frac{e^{i x_1 \cdot (\xi_1- \eta_1) + i x \cdot (\xi - \eta)}}{|x_1|^2} \widehat{f}_0(\xi_1, \xi)
\overline{\widehat{f}}_0(\eta_1, \eta)
 d \xi_1 \, d \xi \,d \eta_1\, d \eta\\
 &\qquad \times dt\, dx\, dx_1\\
 &=c \int \delta (|\xi_1|^2 - |\eta_1|^2) \frac{1}{|\xi_1- \eta_1|}\widehat{f}_0(\xi_1, \xi)
\overline{\widehat{f}}_0(\eta_1, \xi) d\xi_1 d\eta_1 d\xi\\
&\le \int| \widehat{f}_0(\xi_1, \xi)|^2 dx_1 \,d\xi~,
\end{align*}
because one can easily check that
\begin{align*}
\sup_{\xi_1} \int \delta (|\xi_1|^2 - |\eta_1|^2) \frac{1}{|\xi_1- \eta_1|} d \eta_1 \le C~.
\end{align*}
Thus, the kernel $\delta (|\xi_1|^2 - |\eta_1|^2) \frac{1}{|\xi_1- \eta_1|}$
is bounded from $L^2(d \eta_1)$ to  $L^2(d \xi_1)$.
\end{proof}
$\square$

\section{Error terms  $e^B[A, V]e^{-B}$}
\label{sec:error-II}

We proceed to check the operator $e^B[A, V]e^{-B}$.
The calculations of this section are similar to those of the preceding section
with the notable exception of \eqref{1two}--\eqref{1two2}.
Recall the calculations of Lemma \ref{Hartree-F} and write
\begin{align}
e^B[A, V]e^{-B}= \label{AV}
\int v(x-y) \Big( &\overline \phi (y)e^B a^*_xe^{-B} e^B a_x a_y
e^{-B}\\
& + \phi (y) e^B a^*_x a^*_y e^{-B} e^B a_x e^{-B} \Big)\ dx\, dy~. \label{AV1}
\end{align}
Now fix $x_0$. We start with the  term  \eqref{AV}.
According to Theorem \ref{algebra0}, we have
\begin{align*}
e^B a^*_{x_0}e^{-B} =\int \left(\sh (x, x_0) a_x + \chb (x, x_0) a^*_x \right) dx
\end{align*}
while $e^B a_{x_0} a_{y_0}e^{-B}$ has been computed in
\eqref{Q_{xy}}. The relevant terms are
\begin{align}
&\int \shbb (x, x_0) \chbb (y_0, y)\notag
 N_{x y} dx \, dy  \qquad \mbox{and}\\
&\int \shbb (x, x_0) \shbb (y_0, y)\notag
 Q^*_{x y} dx \, dy~.\\ \notag
\end{align}
Combining these two terms, there are three non-zero terms (which will act on $\Omega$):

\begin{enumerate}
\item
\begin{align}
&\int v(x_0-y_0)  \overline \phi (y_0)\label{1one}
\shbb (x_1, x_0)
 \shbb (x_2, x_0) \shbb (y_0, y_2)
 a_{x_1}Q^*_{x_2 y_2}
  \Omega\\
 &\qquad \times d x_1 dx_2 \, dy_2 \, dx_0 \, dy_0~.\notag
\end{align}
This term contributes terms of the form
\begin{align}
\psi(t, y_2)= \label{1one1}
\int v(x_0-y_0)  \overline \phi (t, y_0)
(\shbb (t, x_1, x_0))^2
  \shbb (t, y_0, y_2)
 d x_1\, dx_0 \, dy_0
\end{align}
as well as the term
\begin{align}
\psi(t, x_2)= \label{1one2}
&\int v(x_0-y_0)  \overline \phi (t, y_0)
\shbb (t, x_1, x_0)
 \shbb (t, x_2, x_0) \shbb (t, y_0, x_1)\\
 & \qquad\times d x_1 \, dx_0 \, dy_0~,\notag
\end{align}
which we know how to estimate. The second contribution is
\item
\begin{align}\label{1two}
&\int v(x_0-y_0)  \overline \phi (y_0)
\chbb (x_1, x_0)
 \shbb (x_2, x_0) \chbb (y_0, y_2)
 a^*_{x_1}
 N_{x_2 y_2} \Omega\\
 &\quad\times d x_1 dx_2 \, dy_2 \, dx_0 \, dy_0~.\notag
\end{align}
Commuting $a^*_{x_1}$ with $a_{x_2}$, we find that \eqref{1two} contributes
\begin{align}
\psi(t, y_2)=\label{1two1}
&\int v(x_0-y_0)  \overline \phi (t, y_0)
\chbb (t, x_1, x_0)
 \shbb (t, x_1, x_0) \chbb (t, y_0, y_2)\\
  &\qquad\times d x_1 \, dx_0 \, dy_0~. \notag
\end{align}
We expand $\chbb (t, x_1, x_0)=\delta(x_1-x_0) + \p(t, x_1-x_0)$.
The contributions of $p$ are similar to previous terms, but
$\delta(x_1-x_0)$ presents a new type of term, which will be addressed in
Lemma \ref{l2coll}. These contributions are
\begin{align}
\psi_{\delta p}(t, y_2)=\label{1two22}
&\int v(x_1-y_0)  \overline \phi (t, y_0)
 \shbb (t, x_1, x_1) \p (t, y_0, y_2)\\
  &  dx_1 \, dy_0 \notag
\end{align}
and
\begin{align}
\psi_{\delta \delta}(t, y_2)= \overline \phi (t, y_2) \label{1two2}
\int v(x_1-y_2)
 \shbb (t, x_1, x_1)
  d x_1~.
\end{align}

The last contribution of \eqref{AV} is
\item
\begin{align*}
&\int v(x_0-y_0)  \overline \phi (y_0)
\chbb (x_1, x_0)
 \shbb (x_2, x_0) \shbb (y_0, y_2)
 a^*_{x_1}Q^*_{x_2 y_2}
  \Omega\\
 &\qquad \times d x_1 dx_2 \, dy_2 \, dx_0 \, dy_0 \sim \psi(x_1, x_2, y_2)\notag\\
 \end{align*}
 where
 \begin{align*}
 &\psi(t, x_1, x_2, y_2)\\
 &=
 \int v(x_0-y_0)  \overline \phi (t, y_0)\notag
\chbb (t, x_1, x_0)
 \shbb (t, x_2, x_0) \shbb (t, y_0, y_2)
  dx_0 \, dy_0~,
\end{align*}modulo normalization and symmetrization. This term, as well as all the terms in \eqref{AV1}, are similar to previous ones and are omitted.
\end{enumerate}

We can now state the following proposition.

\begin{proposition}
The state $e^B [A, V] e^{-B} \Omega $ has entries in the first and third slot of a Fock space vector of the form given above.
In addition, if
\begin{align*}
\| \left( i
\frac{\partial}{\partial t} -\Delta_x -\Delta_y \right) \sh \|_{ L^1[0, T] L^2(dx dy)} \le C_1,
\end{align*}
\begin{align*}
\| \left( i
\frac{\partial}{\partial t} -\Delta_x +\Delta_y \right) p \|_{ L^1[0, T] L^2(dx dy)} \le C_1
\end{align*}
and
\begin{align}
\|\shbb(t, x, x)\|_{L^2([0, T] L^2(dx))} \le C_3~, \label{l2collapse}
\end{align}
and
$v(x) = \frac{\chi(x)}{|x|}$  for $\chi$ a $C_0^{\infty}$ cut-off function, then
\begin{align*}
\int_0^T\|e^B [A, V ]e^{-B} \Omega\|^2_{\F} \le C~,
\end{align*}
where $C$ only depends on $C_1$, $C_2$, $C_3$.
\end{proposition}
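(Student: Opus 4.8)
The plan is to mimic the proof of Proposition~\ref{vprop} almost verbatim. The enumeration carried out above already displays $e^B[A,V]e^{-B}\Omega$ as a finite sum of Fock space vectors supported on the first and third slots, each an explicit multiple integral of products of $\shbb$- and $\chbb$-kernels together with the factors $v(x_0-y_0)$ and $\overline\phi$. First I would insert $\chbb(x,y)=\delta(x-y)+p(x,y)$ into every one of these integrals and expand. For each resulting piece in which no kernel is forced onto its diagonal --- this covers the $p$-contributions of \eqref{1two1}, the two terms \eqref{1one1} and \eqref{1one2}, the third contribution of \eqref{AV} (the term $\psi(t,x_1,x_2,y_2)$), and all the terms produced by \eqref{AV1} --- I would estimate exactly as in Proposition~\ref{vprop}: apply Cauchy--Schwarz in the internal integration variables, keep one factor bounded by the energy/transport hypotheses on $\sh$ and on $p$, and feed a product $\shbb\,\shbb$ of two $\shbb$-kernels sharing an integration variable into the local smoothing estimate \eqref{sj} of Lemma~\ref{sjl}; as there, the absolute value inside \eqref{sj} lets me choose $\sh$ or $\shb$ so that the two Laplacians on the shared variable carry the same sign, and any residual factor $v(x_0-y_0)$ is absorbed by one more Cauchy--Schwarz using $v\in L^1(\R^3)$ and $\phi\in L^\infty$. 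All of these pieces are structurally identical to terms already disposed of in the previous section, so I would simply remark that they are similar and omit the computations.

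The genuinely new terms --- and the step I expect to be the real obstacle --- are the diagonal ``collapse'' contributions \eqref{1two22} and \eqref{1two2}, where choosing the $\delta$-part of $\chbb(x_1,x_0)$ forces $\shbb$ onto its diagonal. Here I would set $h(t,x):=\shbb(t,x,x)$, so that hypothesis \eqref{l2collapse} reads $\|h\|_{L^2([0,T]\,L^2(dx))}\le C_3$, and use that $v(x)=\chi(x)/|x|$ is compactly supported with $|x|^{-1}$ locally integrable on $\R^3$, hence $v\in L^1(\R^3)$, so that $\|(v*h)(t,\cdot)\|_{L^2(dx)}\le\|v\|_{L^1}\,\|h(t,\cdot)\|_{L^2(dx)}$ by Young's inequality. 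For \eqref{1two2} one recognizes $\psi_{\delta\delta}(t,y_2)=\overline\phi(t,y_2)\,(v*h)(t,y_2)$ (using that $v$ is even), whence $\|\psi_{\delta\delta}(t,\cdot)\|_{L^2}\le\big(\sup_{0\le s\le T}\|\phi(s,\cdot)\|_{L^\infty}\big)\|v\|_{L^1}\|h(t,\cdot)\|_{L^2}$; squaring, integrating in $t$, and using \eqref{l2collapse} bounds this contribution. For \eqref{1two22} the inner $x_1$-integral is again $(v*h)(t,y_0)$ and the remaining $y_0$-integral exhibits $\psi_{\delta p}(t,\cdot)$ as the operator $p(t)$ applied to the $L^2(dy_0)$ function $\overline\phi(t,\cdot)(v*h)(t,\cdot)$; since $1+p=\ch$ with $\|k(t,\cdot,\cdot)\|_{L^2(dx\,dy)}$ bounded uniformly in $t$ (Theorem~\ref{localex}), the $L^2\to L^2$ operator norm of $p(t)$ is bounded uniformly in $t$, so $\|\psi_{\delta p}(t,\cdot)\|_{L^2}\le C\big(\sup_{0\le s\le T}\|\phi(s,\cdot)\|_{L^\infty}\big)\|v\|_{L^1}\|h(t,\cdot)\|_{L^2}$ and integrating in $t$ again closes the estimate.

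Putting the two parts together bounds each of the finitely many nonzero entries of $e^B[A,V]e^{-B}\Omega$ in $L^2([0,T]\,L^2)$, hence bounds $\int_0^T\|e^B[A,V]e^{-B}\Omega\|_{\F}^2\,dt$, by a constant depending only on the constants $C_1,C_2,C_3$ of the hypotheses together with fixed norms of $\phi$ and of $v$, which is the assertion. The only real content beyond Proposition~\ref{vprop} is thus confined to the collapse terms \eqref{1two22}--\eqref{1two2}, which is precisely why hypothesis \eqref{l2collapse} is imposed; the verification that \eqref{l2collapse} does hold for the kernel $k$ produced by Theorem~\ref{localex} is carried out in section~\ref{sec:trace}.
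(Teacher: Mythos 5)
Your proposal is correct and follows essentially the same route as the paper's proof: defer the non-collapse terms to the argument of Proposition~\ref{vprop}, then handle the diagonal-collapse terms \eqref{1two22}, \eqref{1two2} by using $v\in L^1$ (the truncation) together with Young's inequality and $\phi\in L^\infty$ to control $v*\shbb(\cdot,\cdot,\cdot)$ in $L^2$. Your treatment is somewhat more explicit than the paper's --- in particular, spelling out that $p(t)$ is bounded on $L^2$ in the $\psi_{\delta p}$ term --- but the ideas and hypotheses used are the same.
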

\begin{proof}
The proof is similar to that of Proposition \ref{vprop}, the only exception being the terms \eqref{1two22}, \eqref{1two2}.
It is only for the purpose of handling these terms that the Coulomb potential has to be truncated, since the convolution of the Coulomb potential with the $L^2$ function $\shbb(x, x)$ does not make sense. If $v$ is truncated to be in $L^1(dx)$, then we estimate the convolution  in $L^2(dx)$, and take
$\phi \in L^{\infty}(dy dt)$.
\end{proof}
$\square$

To apply this proposition, we need the following lemma.

\begin{lemma} \label{l2coll} Let $u \in X^{\frac{1}{2}, \frac{1}{2}+} $.
Then,
\begin{align*}
\|u(t, x, x)\|_{L^2 (dt \,dx)} \le C \|u\|_{X^{\frac{1}{2}, \frac{1}{2}+}}~.
\end{align*}
\end{lemma}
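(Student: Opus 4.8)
The plan is to pass to the space-time Fourier transform and reduce the statement to a single uniform bound on an elementary integral. Write $v(t,x):=u(t,x,x)$, and as in \eqref{Xsd} let $\xi=(\xi_1,\xi_2)\in\R^3\times\R^3$ be the Fourier variable dual to the two spatial slots, with $|\xi|^2=|\xi_1|^2+|\xi_2|^2$ (this is the quadratic symbol of $i\partial_t-\Delta_x-\Delta_y$, which is the relevant Schr\"odinger operator here). Fourier inversion gives $\widehat v(\tau,\eta)=\int_{\R^3}\widehat u(\tau,\xi_1,\eta-\xi_1)\,d\xi_1$, so by Plancherel it suffices to bound $\|\widehat v\|_{L^2(d\tau\,d\eta)}$ by $C\|u\|_{X^{1/2,1/2+}}$. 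Write $\delta=\tfrac12+$ and $\widehat u(\tau,\xi_1,\xi_2)=w(\tau,\xi_1,\xi_2)\,|\xi|^{-1/2}\,\big(|\tau-|\xi|^2|+1\big)^{-\delta}$, so that $\|w\|_{L^2}=\|u\|_{X^{1/2,1/2+}}$. Cauchy--Schwarz in $\xi_1$ then yields $|\widehat v(\tau,\eta)|^2\le I(\tau,\eta)\int_{\R^3}|w(\tau,\xi_1,\eta-\xi_1)|^2\,d\xi_1$, where
\[
I(\tau,\eta)=\int_{\R^3}\frac{d\xi_1}{\big(|\xi_1|^2+|\eta-\xi_1|^2\big)^{1/2}\big(|\tau-|\xi_1|^2-|\eta-\xi_1|^2|+1\big)^{2\delta}}\ .
\]
Integrating in $(\tau,\eta)$ and applying Tonelli (the $(\xi_1,\eta)\mapsto(\xi_1,\xi_2)$ change of variables has Jacobian $1$), the lemma reduces to the claim $\sup_{\tau,\eta}I(\tau,\eta)<\infty$.

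To prove the claim, substitute $\xi_1=\tfrac\eta2+\zeta$, which diagonalizes the quadratic form: $|\xi_1|^2+|\eta-\xi_1|^2=\tfrac12|\eta|^2+2|\zeta|^2$. Passing to polar coordinates $\zeta=r\omega$ and then setting $s=2r^2$, one obtains $I(\tau,\eta)=c\int_0^\infty \frac{s^{1/2}\,ds}{(a+s)^{1/2}(|b-s|+1)^{2\delta}}$ with $a=\tfrac12|\eta|^2\ge 0$ and $b=\tau-a$. Since $s^{1/2}(a+s)^{-1/2}\le 1$ whenever $a\ge 0$, this is dominated by $c\int_{\R}(|b-s|+1)^{-2\delta}\,ds$, which is finite and independent of $b$ (hence of $\tau$ and $\eta$) precisely because $2\delta>1$.

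The only real content is this last uniform estimate; everything else is bookkeeping with Plancherel, Cauchy--Schwarz and the change of variables. The one delicate point is that the $|\xi|^{1/2}$ weight is exactly borderline: it is the cancellation $\sqrt s\le\sqrt{a+s}$ that absorbs it, and the extra ``$+$'' in $\delta=\tfrac12+$ is what makes the remaining $\tau$-integral converge --- with $\delta=\tfrac12$ the bound fails logarithmically, consistent with the logarithmic loss remarked upon in the proof of Theorem~\ref{localex}. I would therefore carry out the steps in the order above, making the substitution $\xi_1\mapsto\zeta$ explicit since it is what exposes the structure of $I(\tau,\eta)$.
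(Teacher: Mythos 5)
Your proof is correct and follows the same high-level strategy as the paper: pass to the space-time Fourier transform of the restriction $u(t,x,x)$, apply Cauchy--Schwarz in the frequency variable, and reduce everything to a uniform bound on a convolution-type integral. The one structural difference is that you stay at the level of the $X^{1/2,1/2+}$ norm throughout, whereas the paper first invokes the standard $X^{s,\delta}$ transference principle ("it suffices to prove the result for free solutions with $H^{1/2}$ data") and then runs the Cauchy--Schwarz/Plancherel argument against the delta measure $\delta(\tau-|\xi-\eta|^2-|\xi+\eta|^2)$. Working directly in $X^{s,\delta}$ costs you slightly more bookkeeping --- you must carry the weight $\big(|\tau-|\xi|^2|+1\big)^{-2\delta}$ and perform the polar-coordinate reduction to the one-dimensional integral $\int_0^\infty \frac{s^{1/2}\,ds}{(a+s)^{1/2}(|b-s|+1)^{2\delta}}$ --- but it buys transparency: it makes visible exactly where $\delta=\tfrac12+$ rather than $\tfrac12$ is needed (convergence of $\int_\R(|s|+1)^{-2\delta}ds$), a point the paper leaves implicit in the transference step. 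Both routes rest on the same essential cancellation, namely that the $|\xi|^{1/2}$ weight is absorbed by the elementary bound $s^{1/2}\le(a+s)^{1/2}$ (equivalently, the paper's observation that $\sup_{\xi_1}\int\delta(|\xi_1|^2-|\eta_1|^2)\,|\xi_1-\eta_1|^{-1}\,d\eta_1\le C$).
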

\begin{proof}
As it is well known, it suffices to prove the result for
$u$ satisfying
\begin{align*}
\left( i
\frac{\partial}{\partial t} -\Delta_x -\Delta_y \right) u(t, x, y) =0
\end{align*}
with initial conditions $u(0, x, y)=u_0(x, y) \in H^{\frac{1}{2}}$.
This can be proved as a ``Morawetz estimate", see~\cite{G-M},
or as a space-time estimate as in \cite{K-M}. Following the second approach,
the space-time Fourier transform of $u$ (evaluated at $2 \xi $ rather than $\xi$ for neatness) is
\begin{align*}
\widetilde{u}(\tau, 2 \xi)
&=c \int \delta(\tau - |\xi -\eta|^2 - |\xi +\eta|^2) \widetilde{u}_0(\xi- \eta, \xi+ \eta) d \eta\\
&=c \int \frac{\delta(\tau - |\xi -\eta|^2 - |\xi +\eta|^2)}
{(|\xi- \eta|+|\xi+ \eta|)^{1/2}}
 F(\xi- \eta, \xi+ \eta) d \eta~,\\
\end{align*}
where $F(\xi- \eta, \xi+ \eta)=(|\xi- \eta|+|\xi+ \eta|)^{1/2}
\widetilde{u}_0(\xi- \eta, \xi+ \eta)$.
By Plancherel's theorem, it suffices to show that
\begin{align*}
\|\widetilde{u}\|_{L^2(d \tau d \xi)} \le C \|F\|_{L^2(d \xi d \eta)}~.
\end{align*}
This, in turn, follows from the pointwise estimate (Cauchy-Schwartz with measures)
\begin{align*}
&|\widetilde{u}(\tau, 2 \xi)|^2\\
&\le c \int \frac{\delta(\tau - |\xi -\eta|^2 - |\xi +\eta|^2)}{
|\xi- \eta|+|\xi+ \eta|} d \eta\\
& \times\int\delta(\tau - |\xi -\eta|^2 - |\xi +\eta|^2)|F(\xi- \eta, \xi+ \eta)|^2
d\eta
\end{align*}
and the remark that
\begin{align*}
\int \frac{\delta(\tau - |\xi -\eta|^2 - |\xi +\eta|^2)}{
|\xi- \eta|+|\xi+ \eta|} d \eta \le C~.
\end{align*}
\end{proof}
$\square$

\section{The trace $\int d(t, x, x) dx$}
\label{sec:trace}
This section addresses the control of traces involved in derivations.
Recall that
 \begin{align*}
d(t, x, y)=
&
\left(i \sh_t +\sh g^T+ g \sh\right) \shb \notag\\
-&\left(i \ch_t + [g, \ch]\right)\ch\notag\\
-&\sh \overline m \ch -\ch m \shb~.
\end{align*}
Notice that if $k_1(x, y)\in L^2 (dx\, dy)$ and $k_2(x, y)\in L^2 (dx\, dy)$
then
\begin{align*}
&\int |k_1 k_2|(x, x) dx \le \int |k_1(x, y)| |k_2(y, x)| dy \, dx\\
&\le \|k_1\|_{L^2} \|k_2\|_{L^2}~.
\end{align*}
Recall from  Theorem \ref{localex} that if $v(x) = \frac{\epsilon}{|x|}$ or
$v(x) = \chi(x) \frac{\epsilon}{|x|}$ then $
i \sh_t +\sh g^T+ g \sh$, $i \ch_t + [g, \ch]$
and $\sh$ are in $L^{\infty}([0, 1])L^2(dx dy)$.
This allows us to control all traces except the contribution of $\delta(x-y)$ to the second term.
But, in fact, we have
\begin{align*}
 i\ch_t + [g, \ch]=
 \left(i k_t - \Delta_x k - \Delta_y k\right) \overline k
-k \overline{ \left(i k_t - \Delta_x k - \Delta_y k\right)} + \ldots~,
\end{align*}
which has bounded trace, uniformly in $[0, 1]$.

\end{document}